\newcommand{\figmode}{0} 
\def\I{\mathcal{I}}
\def\S{\mathcal{S}}
\def\C{\mathcal{C}}
\def\ep{\varepsilon}
\def\one{\mathbf{1}}
\def\then{\Longrightarrow}
\def\argmax{\mbox{argmax}}
\def\la{\lambda}
\def\Re{\mathbf{R}}
\def\os{\emptyset}
\newcommand{\df}[1]{\textit{#1}}
\newcommand{\abs}[1]{ \left | #1 \right | }
\theoremstyle{plain}
\newtheorem{theorem}{Theorem}
\newtheorem{lemma}[theorem]{Lemma}
\theoremstyle{remark}
\newtheorem{remark}[theorem]{Remark}
\title[Inferring Rankings from Scores and Selections]{Prestige in Numbers: How Test Scores and Choices Reveal School Rankings}
\author{Federico Echenique and Michael Olabisi}
\date{\today}
\thanks{The authors have relied upon data supplied by the Graduate Management Admission Council (“GMAC”) to conduct the independent research that forms the basis for the findings and conclusions stated by the authors in this article. These findings and conclusions are the opinion of the authors only and do not necessarily reflect the opinion of GMAC}
\begin{document}
\begin{abstract}
This paper introduces a novel revealed-preference approach to ranking colleges and professional schools based on applicants' choices and standardized test scores. Unlike traditional rankings that rely on data supplied by institutions or expert opinions, our methodology leverages the decentralized beliefs of potential students, as revealed through their application decisions. We develop a theoretical model where students with higher test scores apply to more selective institutions, allowing us to establish a clear relationship between test score distributions and school prestige. Using comprehensive data from over 490,000 GMAT test-takers applying to U.S. full-time MBA programs, we implement two ranking methods: one based on monotone functions of test scores across schools, and another using score-adjusted tournaments between school pairs. Our approach has distinct advantages over traditional rankings: it reflects the collective judgment of the entire applicant pool rather than a small group of experts, and it utilizes data from an independent testing organization, making it resistant to manipulation by institutions. The resulting rankings correlate strongly with leading published MBA rankings ($\rho = 0.72$) while offering the additional benefit of being customizable for different student subgroups. This method provides a transparent alternative to existing ranking systems that have been subject to well-documented manipulation.
\end{abstract}
\maketitle

\section{Introduction}\label{sec:intro}  \def\GMAT{ GMAT\texttt{\texttrademark} }

Colleges and professional schools draw hundreds of thousands of  students every year on the basis of rankings that can be subject to manipulation. Getting fair estimates of the value that schools represent to prospective students before their application and enrollment decisions is a multifaceted challenge, one that has inspired a rich literature on the value of education \citep[e.g.,][]{angrist2023methods,mountjoy2021returns,macleod2017big}, on how students prepare for admissions to high-reputation schools \citep[e.g.,][]{macleod2015reputation}, and on the incentives of schools to rig the rankings \citep{dearden2019strategic,averyetalRPranking,ehrenberg2003method}, as well as notable articles in the popular press \citep[e.g.,][]{tough2023americans,columbia}.

We propose a new \textit{revealed preference approach} to ranking colleges and professional schools. Our approach is based on where students choose to apply, and on the test scores that the students receive in a standardized test. Our methodology builds on a portfolio-choice model meant to capture where students apply. Under certain assumptions, we can derive a comparative statics result which says that students with higher standardized test scores are more aggressive ---they apply to ``better,'' more selective, schools--- than students with lower standardized test scores. 

The comparative statics provide us with a reduced-form connection between the score distribution of the applicants to a school and the prestige, quality, or selectivity of the school in question. Our theoretical results say that if one had access to the relevant ranking of schools, then one would observe that if school A is ranked higher than B, then the test score distribution of the students who apply to A would be better than the distribution of the students who apply to B. To operationalize our method, we propose ways of backing out a ranking for which this property holds true.

We apply our methodology to the ranking of U.S.\ full-time MBA programs and the \GMAT exam. After completing the exam, students must indicate to the test administrator which schools should receive their test scores as an indication of the schools where they have submitted, or intend to submit applications. We implement two versions of our methodology: one backs out a ranking by imposing that a certain class of functions is monotone increasing as colleges improve in the ranking. The second version is a score-adjusted tournament between school pairs. With the model, we are able to back out a common order, or ranking, over colleges purely with the information about where students apply and how they performed on the standardized test.



College and school rankings are common and influential. They rest on different methodologies, from data supplied by the colleges to surveys of prospective students.  Our approach is distinctive in a number of ways:

First, our ranking reflects the decentralized beliefs over the universe of schools held by the population of potential students. The ranking we obtain is an  estimate of a common ranking that would underlie the students' views about the schools. In particular, and in contrast with some of the best-known rankings in the popular press, it does not reflect the views of a small coterie of experts or insiders. It reflects the broad views of the whole population of potential students.

Second, our ranking uses data on standardized tests that is available from the test administrator. As a consequence, the source of the data is an independent body that is not under the influence of any particular college. The rankings of, for example, US News \& World Report (USNWR) require data that must be supplied by college administrators. This is problematic. There are documented instances of selective reporting and data manipulation. A Columbia University mathematics professor made very specific accusations against his university of rigging the process that gave it the USNWR No.\ 2 ranking in 2022 \citep{columbia}. The claims of dishonesty include fudging class sizes and the percent of full-time faculty with terminal degrees. Temple University's MBA program had an equally notable scandal. Its former business school dean was convicted in court in 2021 of using fraudulent data to boost school rankings before his removal \citep{temple}. These scandals exist side-by-side with unease in academia about the echo-chamber problem of USNWR and similar ranking systems, which use school administrators' ratings of other schools for a notable share of the ranking weight. Students (and potential students), the primary audience for the rankings, and whose tuition dollars are at stake, have no voice in these measures of reputation. Taking our approach, which organizes the features and choices of the vast pool of potential students, steps around the well-documented incentives of schools to rig the rankings  \citep[c.f.,][Section VI.]{averyetalRPranking}.

Ours is not the first paper with a revealed-preference approach to ranking colleges. \cite{averyetalRPranking} use survey data on where \emph{admitted} students decide to enroll after being offered admission. They use a discrete-choice model to estimate a common-value component that seeks to capture a measure of school quality. Their results are based on a very selective survey: In the authors' words, their survey captures ``students with very high college aptitude who are
likely to gain admission to the colleges with a national or broad regional draw.'' For example, the average SAT score in the authors' sample is in the 90th percentile of the SAT score distribution, and 47\% of the students in the sample applied to at least one Ivy League school. As a consequence, Avery et al.\ are only able to rank a reduced set of schools and the ranking is not representative of the whole population of students. Data on admitted students' choices for all colleges or business schools are generally unavailable to any individual or institution. In contrast, we are able to work with a comprehensive data set: the choices of \emph{all} US-based candidates applying to full-time US MBA programs, high-scoring and low-scoring alike. This encompasses more than 1.7 million score reports that show the schools and programs selected by more than 490,000 unique candidates over the nine-year period in our data.

A crucial methodological difference with \cite{averyetalRPranking} is that they consider students' enrollment choices from a ``budget'' of schools at which the student was accepted. Our approach relies instead on the set of colleges selected by candidates --- that is, the set of schools picked to receive candidates' \GMAT scores. The inference from enrollment used in \cite{averyetalRPranking} seems more transparent and easier to connect to preferences, except that the budget of schools is really endogenous. So ratings from admitted candidates' choices may be tainted by endogeneity. Students choose where to apply based on their preferences and characteristics, and the budget that they then choose from depends on where they are admitted. For example, imagine that there are two kinds of students: those who want a technical career and those interested in a liberal arts education. The students might either apply mostly to liberal arts colleges, say to A and C; or mostly to engineering schools, B and D. Now imagine that most students would rank A and B as the two top schools, but the engineering students who apply to B and D tend to prefer D if they get admitted. Then it is possible for the Avery et al.\ methodology to conclude that D should be ranked at the top because it tends to be chosen over B by the students who are given such a choice, even if most students actually rank B over D (this example is based on the discussion in \cite{averyetalRPranking} explaining Caltech's place in their ranking).

We have emphasized the strengths in our approach, but the work of \cite{averyetalRPranking} has advantages as well. By considering the actual choices made by the students, they do not need to rely on the connection between ranking and test scores that we establish theoretically. In consequence, their methodology does not rely on the assumptions we make in our theory (they make other assumptions, though). Moreover, the survey data in their paper contains a richer set of covariates and allows them to consider empirical questions that are not possible with our administrative data.

Our theoretical model captures students' selection of an application portfolio. The problem of choosing an \emph{application portfolio} of schools is a unique kind of optimization problem. A student chooses a set of schools but has in mind an ordering of the schools based on preferences. This means that the third-best school that the student applies to, for example, is only relevant in the event that the student is rejected from their first- and second-best choices. The portfolio selection problem presents unique challenges and was first studied by \citet{chade2006simultaneous}. The work of Chade and Smith, when applied to the college application problem, assumes that colleges make independent admissions decisions. The more recent paper by  \citet{alishorrer24} assumes that admissions decisions are based on common test scores, and is therefore much closer to our theoretical results. In fact, while the models in the two papers are somewhat different, Proposition~1 in \cite{alishorrer24} is closely related to our main theoretical findings. Their comparative statics results say that bad news about test performance translates into a less aggressive application portfolio; this is similar to one of the ideas that go into our Theorem~\ref{thm:main}. Our theoretical results also connect the score distribution to the underlying ranking of colleges --- there is no analogous connection in any of the preceding theoretical literature. Our paper also differs from the theoretical literature in purpose and motivation: \citet{chade2006simultaneous} is a general study of optimal portfolio problems (there is a follow-up paper that embeds the problem in a market-level equilibrium model of college applications: see \citet{chade2014student}). The model in \citeauthor{alishorrer24} is based on subjective beliefs over test performance, rather than on applicants' selection of portfolios after taking the test. They then try to explain correlations in college decisions, while we seek to infer a formal measure of the unobserved common value of each school from test candidates' observable application decisions.

Our main empirical results are rankings of full-time MBA programs (FTMBA) based on student selections. Our rankings have a correlation coefficient of 0.72 with the leading published FTMBA ranking for US universities in 2018, but they are, by construction, insulated from manipulation by universities and business schools. The rankings have one other notable advantage: they can be tailored to specific sub-groups of candidates (much like \cite{averyetalRPranking}, but without needing a specialized survey to gather preferences). For example, how the FTMBA helps the career of a business or economics major may differ from how it helps an engineering major who wishes to transition into a consulting or management role. Other subgroups with systematically different needs can have rankings that reflect the choices of others in their situation. 

The next section describes the model. Section \ref{sec:meth} explains our estimation strategy and describes the data. Section \ref{sec:res} presents the estimation results, followed by a discussion of the features of the model in Section \ref{sec:discussion} and the conclusion. 

\section{Model}\label{sec:model}
\subsection{Model Setup}\label{sec:setup}
We propose a model in which potential students choose an optimal portfolio of schools to apply to, while schools are selective in their admissions decisions. Colleges are expected to offer admission only if the sum of the features in each application package and a stochastic component exceeds the institution's admission threshold. Higher-ranked schools have higher requirement thresholds for admitting a student than do lower-ranked schools. Our model yields a method for estimating this ranking based on the idea that the most prestigious universities are also the most selective on the observable features of candidates. 

The primitive elements of our model are:
\begin{itemize}
    \item A set $\C=\{c_1,\ldots,c_m\}$ of all \df{colleges}, and for each  college $c_j\in \C$, a \df{threshold} $t_j$.
        \item A set $\S= \{s^1,\ldots, s^T \} \subseteq \Re$ of all possible \df{test scores}, with $s^1> s^2 > \cdots > s^T$. For example, GMAT exam scores go from 200 to 800 in 10 point increments.
    \item A set $\I = \{1,\ldots, n\}$ of all \df{students}, and for all $i\in \I$, a score $s_i\in \S$. 
    \item An absolutely continuous distribution function $F$.
    \item A probability measure $\la$ on $\Re^{\C}$.
\end{itemize}

\emph{Colleges}: Given is a set of colleges $\C$. Each individual college is identified with a particular program. When two programs are offered by the same institution, we treat them as two different colleges. 

\emph{Students}: Each student $i\in \I$ has a utility function $v_i: \C\to \Re_{++}$ over colleges. We write the function as $v_i(c_j)=v_{i,j}$, for $j=1,\ldots,m$.  

For technical convenience, we assume that each student regards each school $c$ as a collection of identical programs: effectively that there are ``duplicate'' schools. So that, if the student wanted, they could apply multiple times to the same school. As we explain in detail in Section~\ref{sec:discussion}, the assumption of duplicate colleges is minor and done for technical convenience.

\emph{Tests and admissions}:  Each student $i$ receives a test score $s_i\in \S$. A college $c_j$ evaluates student $i$ according to their test scores, as well as an idiosyncratic component $\ep_{i,j}$ that is drawn in iid fashion from the distribution $F$. 

Concretely, college $c_j$ has an admissions \df{threshold} $t_j$. If $i$ applies to $c_j$, they are admitted if \[ 
s_i + \ep_{i,j} \geq t_j.
\] In consequence, the probability that student $i$ is accepted at college $c_j$, if they apply there, is 
\[ 
p_i(c_j) = F(s_i - t_j)
\] 

The set of thresholds defines a selectiveness ordering $\leq$ on colleges.  We say that $c_j \leq c_h$ if $t_j\leq t_h$. So college $c_j$ comes before college $c_h$ in the selectiveness ordering if $c_j$ uses a less selective threshold than $c_h$. Thus if $c_j\leq c_h$ then $p_i(c_h)\leq p_i(c_j)$ for all students $i$. 

Test scores are particularly subject to selectivity, given that the average scores of accepted students feature in the metrics that expert opinion sources use to rank schools (e.g. US News and World Report). Schools are likely to `game' the ranking system by selecting candidates with high test scores, even if those candidates are not as strong on other observable characteristics \citep[e.g.,][]{averyetalRPranking,dearden2019strategic}. 

\emph{Portfolio selection problem}: Students are required to send their test scores to schools as part of the test-taking process. Schools receive the test scores directly from the test administrator (and students are responsible for submitting other application materials to schools). Test scores in our model measure school readiness, not necessarily innate ability \citep[c.f,][]{fu2014equilibrium}.

Student $i\in \I$ with utility function $v_i$ chooses a \df{portfolio} $A\subseteq \C$ of colleges to apply to. The expected utility for $i$ of applying to this set of colleges can be calculated as follows. 

Enumerate $A$ as  $\{ c^1,\ldots, c^K \} $ with $v_i(c^1)\leq v_i(c^2) \cdots \leq v_i(c^K)$. Then, 
\begin{align}
U_i(A) = p_{iK}v_{iK} + p_{iK-1}v_{iK-1}(1-p_{iK}) +\ldots + p_{i1}v_{i1}\prod_{l=2}^K(1-p_{il}), \label{eq:pfolio}
\end{align}
where 
$p_{ik} = p_s(c_k)$ is the probability of college $k$ accepting the candidate $i$ with the score $s$, and $v_{ik} = v_i(c_k)$.

The portfolio selection is driven by a candidate-specific ranking of schools, given the expected benefits that admission to each school confers. Crucially, the portfolio selection is different from other problems of optimal choice among uncertain prospects. The student will only attend one college, so the expected utility of a portfolio is driven by the best college that the student gets admitted to. Such portfolio choice problem have been studied by \cite{chade2006simultaneous} and \cite{alishorrer24} (see our discussion of these papers in the introduction).

Let $C_s(v)\in \argmax\{U_i(A):A\subseteq \C, \abs{A}\leq k^v_s\}$ be the set of colleges chosen by a student with score $s$ and utility function $v:\C\to \Re$. The student with score $s$ and utility function $v$ is constrained to $k^v_s$ applications.



\emph{Extension:} The model can be extended to capture application requirements that go beyond test scores and purely idiosyncratic considerations. For example, the work of \cite{averyetalRPranking} allows for covariates and individual unobserved heterogeneity, as long as it affects the agents in an additive fashion. Our model allows for similar, additively separable, extensions of the model. Let $o_i$ denote ``other features'' that pertain to a candidate $o_i$; for example, extracurricular activities or real-world experiences. Let $t_j+o_j$ be the modified threshold used by a college that takes such other features into account. The threshold condition for admission then becomes:
\begin{align}
s_i + o_i + \ep_{i,j} > o_j + t_j, \label{eq:thresh}
\end{align}

Highly ranked schools are therefore more likely to admit high-scoring students, as well as students with higher levels of other observable features required by schools. The model then reduces to ours if we make some additional assumptions: Specifically if  $o_i = \delta s_i + \zeta_{ij}$ and $o_j = \delta t_j + \zeta'_{ij}$, with $\zeta_{ij}$ and $\zeta'_{ij}$ being purely idiosyncratic iid terms. Then equation \eqref{eq:thresh} becomes $ E\left[(1 + \delta) s_i + \ep_{i,j}\right] > E\left[(1 + \delta) t_j\right]$, so that in the aggregate, one still expects $F'[(1 + \delta)(s_i - t_j)] = F(s_i - t_j)$. The probability of admission remains linked to test scores.

\subsection{Main Theorem}\label{sec:mainthm}

Consider the following (upper tail of) the distribution of schools:
\[ 
\bar G_s(c) = \Pr ((\exists c'\in \C)(c'\geq c \text{ and } c' \text{ is chosen by a student with score } s)).
\] So that $1-\bar G_s(c)$ is the distribution of schools that a student with score $s$ applies to. For notational convenience, we work here with the upper tail instead of the usual cumulative distribution function.

\begin{theorem}\label{thm:main} Suppose that $F$ is concave and that $s\leq s'$ implies $k^v_s\leq k^v_{s'}$.
For any $c\in \C$ and $s,s'\in \S$, $s\leq s'$ implies that $\bar G_{s'}(c)\geq \bar G_s(c)$.
\end{theorem}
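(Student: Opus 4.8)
\emph{Proof strategy.} Write $\C_{\geq c}=\{c'\in\C:c'\geq c\}$. Since $\bar G_s(c)=\int\one\{C_s(v)\cap\C_{\geq c}\neq\os\}\,d\la(v)$, the theorem reduces to a pathwise statement: for $\la$-a.e.\ utility $v$, if $s\le s'$ then $C_s(v)\cap\C_{\geq c}\neq\os$ implies $C_{s'}(v)\cap\C_{\geq c}\neq\os$. Because $\C_{\geq c}$ is an up-set for the selectiveness order, this just says that the most selective (``reach'') school a student applies to is nondecreasing in the student's score; so the task is to prove, for fixed $v$, that $\mathrm{reach}(C_s(v))$ is monotone in $s$, and then integrate against $\la$. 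Since $C_s(v)$ is merely an element of an $\argmax$, there is a tie-breaking issue, which I would dispatch by genericity — absolute continuity of $F$ together with $v\in\Re_{++}^{\C}$ makes the relevant comparisons strict for $\la$-a.e.\ $v$ — or by breaking ties toward more aggressive portfolios; I take the inequalities below to be strict.

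Step~1 is a structural lemma: every optimal portfolio is \emph{assortative}, i.e.\ if $A=C_s(v)=\{a_1,\dots,a_K\}$ with $v(a_1)\le\cdots\le v(a_K)$ then $t_{a_1}\le\cdots\le t_{a_K}$ (equivalently $p_s(a_1)\ge\cdots\ge p_s(a_K)$). If not, two schools adjacent in the utility order have the less-preferred one strictly more selective, hence strictly dominated inside $A$; replacing it by a \emph{duplicate} of the dominating school — the one place the duplicate-colleges device is needed — strictly raises $U_s$ (expand \eqref{eq:pfolio} across that pair and use that utilities are strictly positive) while preserving $\abs{A}\le k^v_s$, contradicting optimality. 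Two corollaries will be used: (i) the most selective school $a_K$ of an optimal portfolio is also its highest-utility ``reach'' school; (ii) no school in an optimal portfolio is strictly dominated by another available school, so in particular the reach school of one optimal portfolio cannot be strictly dominated by a school of another.

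Step~2 converts the concavity hypothesis into a monotone-likelihood-ratio property. Concavity of $F$ implies log-concavity, since $F(\ta x+(1-\ta)y)\ge\ta F(x)+(1-\ta)F(y)\ge F(x)^{\ta}F(y)^{1-\ta}$ by AM--GM; applied to $p_s(c)=F(s-t_c)$ this gives, for $s\le s'$ and $c\le c'$ (that is, $t_c\le t_{c'}$), $p_{s'}(c)/p_s(c)=F(s'-t_c)/F(s-t_c)\le F(s'-t_{c'})/F(s-t_{c'})=p_{s'}(c')/p_s(c')$: more selective schools reap a larger proportional increase in admission probability as the score rises. This is the single-crossing input behind the portfolio comparative statics of \citet{chade2006simultaneous} and \citet{alishorrer24}.

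Step~3 is the comparative statics itself. Fixing $v$, I would first show the reach is nondecreasing in the score at a fixed budget $k$ (write $C_{s,k}(v)$ for the optimal portfolio at score $s$ under budget $k$), and then, invoking the hypothesis $k^v_s\le k^v_{s'}$, obtain the analogous monotonicity of the reach in $k$ at a fixed score, composing the two to conclude $\mathrm{reach}(C_s(v))\le\mathrm{reach}(C_{s'}(v))$. For the fixed-budget claim, suppose for contradiction that $A=C_{s,k}(v)$ has reach $a_K\geq c$ while $A'=C_{s',k}(v)$ has reach strictly below $c$, hence strictly less selective than $a_K$; corollary (ii) then forces $v(a_K)\geq v(b)$ for all $b\in A'$. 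Form $A''$ by inserting a duplicate of $a_K$ into $A'$ and, if the budget binds for $A'$, deleting $A'$'s least selective school; both $A''$ and $A^\circ:=A\setminus\{a_K\}$ are feasible at their scores, so optimality gives $U_{s'}(A')\ge U_{s'}(A'')$ and $U_s(A)\ge U_s(A^\circ)$, and it is enough to contradict these by showing $U_{s'}(A'')-U_{s'}(A')>U_s(A)-U_s(A^\circ)$. Expanding both sides with \eqref{eq:pfolio} — $a_K$ sits on top of $A''$ by the utility comparison above and on top of $A$ by assortativity, so the right-hand side is $p_s(a_K)\bigl(v(a_K)-U_s(A^\circ)\bigr)$ — reduces the claim to comparing the marginal value of the reach slot at $s'$ with its marginal value at $s$, which should follow from the log-supermodularity of Step~2. \textbf{The hard part is this reduction.} Since $A$ and $A'$ are a priori unrelated portfolios, deploying the $s$-optimality of $A$ inside an inequality about $U_{s'}$ of modified copies of $A'$ forces one to route everything through the monotone ratios $p_{s'}/p_s$ and through the feasibility inclusions attached to the cardinality constraints; carrying out that bookkeeping, and disposing of the degenerate cases (a reach school with zero admission probability, utility ties, a non-binding budget), is the real content of the proof, and is where I would follow the exchange argument behind Proposition~1 of \citet{alishorrer24} and the portfolio structure of \citet{chade2006simultaneous} most closely.
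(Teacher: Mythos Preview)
Your pathwise reduction is sound: under the stated definition $\bar G_s(c)=\la(\{v:C_s(v)\cap\C_{\geq c}\neq\os\})$, it suffices to show that the most selective school in $C_s(v)$ is nondecreasing in $s$ for each $v$, and your Step~1 correctly identifies this reach with the portfolio's highest-utility school. The paper in fact proves more---a position-by-position comparison $v(c^k)\le v(d^k)$ (Lemma~\ref{lem:paloapalo}) yielding an injection $f_v:C_s(v)\to C_{s'}(v)$ with $c\le f_v(c)$---and its final computation manipulates the expected \emph{count} $\int|\{c'\in C_s(v):c'\ge c\}|\,d\la$ rather than the probability; but for the theorem as written your reach-only reduction is enough.

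The gap is the swap in Step~3. Your target inequality $U_{s'}(A'')-U_{s'}(A')>U_s(A)-U_s(A^\circ)$ compares the marginal value of $a_K$ on top of (a truncation of) $A'$ at $s'$ with its marginal value on top of $A^\circ$ at $s$; but the two continuation values $U_{s'}(A'\setminus\{b_1\})$ and $U_s(A^\circ)$ sit inside unrelated portfolios, and the factors move against each other---$p_{s'}(a_K)\ge p_s(a_K)$, while $v(a_K)-U_{s'}(\cdot)\le v(a_K)-U_s(\cdot)$ since higher scores raise every continuation value. Already for $K=k=1$ the inequality is false: with $A=\{a\}$, $A'=\{b\}$, $A^\circ=\os$ it reads $p_{s'}(a)v(a)-p_{s'}(b)v(b)>p_s(a)v(a)$, and the left side is $\le 0$ by optimality of $A'$ while the right side is $>0$. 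The paper avoids this by first establishing the Bellman recursion $V(K)=\max_c\{p(c)v(c)+(1-p(c))V(K-1)\}$ (Lemma~\ref{lem:VandA}), so the reach solves a \emph{one-dimensional} problem parametrized by the scalar $V(K-1)$; monotone comparative statics in $(s,V)$ then follow from increasing differences, where the step in $s$ uses that the density $f=F'$ is decreasing---i.e.\ full concavity of $F$, strictly stronger than the log-concavity/MLR of your Step~2. If you keep the reach-only strategy, the clean fix is to adopt the recursion and run the Lemma~\ref{lem:paloapalo} argument at the single top position $k=K$; the whole-portfolio swap does not close.
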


Theorem~\ref{thm:main} says that the distributions of college applications for each score are increasing in the score in the sense of first-order stochastic dominance. The proof of the theorem is in Section~\ref{sec:proof}.

Importantly, the distribution $\bar G_s(c)$ is unobservable. It relies on the ordering $\geq$, which is precisely the object that we seek to estimate from the data: the relevant ranking over schools. What is observable is the portfolio of schools that each student applies to. We leverage Theorem~\ref{thm:main} to, very loosely speaking, invert $\bar G_s(c)$ and thus recover the ranking $\geq$.

\section{Methods and Data}\label{sec:meth}
\subsection{Methods}\label{sec:m}

The core of our methodology is that students with high test scores tend to apply to highly-ranked colleges. It works because it aligns the desirability of a college with its selectiveness. The colleges with greater prestige, or perceived quality, attract more applications. As a consequence, the probability of acceptance for these colleges is low. Selectivity creates an advantage for candidates with higher test scores, who will send their scores to the selective programs, while low-scoring candidates will be less likely to do so.

While intuitive, this core idea is subtle.  High-scoring students have a higher probability of being accepted {\em anywhere} than students with low test scores. It is not at all obvious that they will favor the better colleges. In fact, a high-scoring agent might apply to a subset of very selective schools and then offset this by including a larger number of safety schools, or perhaps some schools that are safer than they would have applied to if their test scores had been lower. Theorem~\ref{thm:main} establishes that, under the assumptions in our model, the comparative statics of applications in test scores work as described. The ideas and assumptions behind our result are discussed further in Section~\ref{sec:discussion}.

\subsection{Empirical Strategy}\label{sec:empiricalstrategynew}
We consider two approaches for inferring the aggregate ranking of preferences from student selections.

The challenge is to back out a ranking for which the statement in Theorem~\ref{thm:main} is valid. We need an ordering $\geq$ for which, for any $c\in \C$ and $s,s'\in \S$, $s\leq s'$ implies that $\bar G_{s'}(c)\geq \bar G_s(c)$. Naively, one might try to search for a ranking that delivers the property in our theorem, but this approach is not feasible computationally. The set of possible rankings is extremely large and has no structure that allows for a ``gradient style'' algorithm.\footnote{For example, with $100$ schools (and we have many more in our data) the number of possible rankings is approximately $9.3\times 10^{157}$; a number that far exceeds the number of particles in the universe.} Instead, we leverage some simple implications of the theorem.  

\subsubsection{The $m$ measure and monotonicity in rank.}\label{sec:i1}

To operationalize Theorem~\ref{thm:main}, we consider a class of functions that are meant to be monotone in rank. This allows us to recursively identify the top-ranked schools, out of those that have not been ranked yet. The method discussed in this section is practical and serves as the basis for the results that we report in Section~\ref{sec:res}. Appendix~\ref{sec:altempiricalstrategy} presents an alternative methodology that is potentially useful with a different dataset than ours.

First, we identify the top school as the one that is ``most monotone increasing'' with score.  Specifically, let $g_s(c)$ be the fraction of students with score $s$ that apply to school $c$. Consider the quantity \[
m_1(c) = \sum_{s,s'\in S}(g_{s'}(c) - g_s(c))(s'-s).
\] Note that if $c$ is the top school, then $s\mapsto g_s(c)$ is monotone increasing, so that $(g_{s'}(c) - g_s(c))(s'-s)>0$.

Suppose that $c_1$ has been identified as the top school. Now consider each remaining school $c$, then if $c$ were number two then for each score the right tail of the distribution is \[ 
\bar G_s(c) = g_s(c)+g_s(c_1).
\] Now, one may consider 
\[
m_2(c) = \sum_{s,s'\in S}(\bar G_{s'}(c) - \bar G_s(c))(s'-s)\] \[= 
\sum_{s,s'\in S}(g_{s'}(c)- g_s(c))(s'-s) + \sum_{s,s'\in S}(g_{s'}(c_1) -g_s(c_1) )(s'-s)
\]
and choose the $c\neq c_1$ that maximizes this expression. Given that only the first term,
$\sum_{s,s'\in S}(g_{s'}(c)- g_s(c))(s'-s)$, depends on $c\neq c_1$, this school is the one yielding the second-highest value of $m_1(c)$.

Therefore, the ranking produced by $m_1$ yields an estimate of the ranking of schools that is captured by $\geq$.

An alternative approach is based on another class of functions that are monotone increasing in the ordering of schools. As a first step, we may find the top ranked school $c_1$ by maximizing \[
m^+(c) = \sum_{s,s'\in S}1_{(g_{s'}(c) - g_s(c))(s'-s)>0}.
\] This function simply counts the number of times that we have the ``correct'' correlation between scores and school: $(g_{s'}(c) - g_s(c))(s'-s)>0$. It stands to reason that $c_1$ maximizes this function.

Then we may proceed recursively. Suppose that we have determined that  $c_1>c_2>\cdots c_{k-1}$. Now we may find $c_K$ by letting $\bar G_s(c) = g_s(c) + \sum_{k=1}^{K-1}g_s(c_k)$ and maximizing
\[
m^+(c) = \sum_{s,s'\in S}1_{(\bar G_{s'}(c) - \bar G_s(c))(s'-s)>0}.
\]
If indeed the top colleges are $c_1>c_2>\cdots c_{k-1}>c$ then $\bar G_s(c)$ would be monotone increasing in $s$ and hence $(\bar G_{s'}(c) - \bar G_s(c))(s'-s)>0$.

\subsubsection{A tournament}\label{sec:i2}

The ideas in Theorem~\ref{thm:main} can be operationalized through a tournament. Consider two schools $c$ and $c'$, and two students that have the same utility function over schools. Suppose that the first student has a higher test score, and that the first student applies to $c'$ while the second applies to $c$. (Each student chooses one school but not the other). Then we might infer (in line with Theorem~\ref{thm:main}) that $c' > c$. This idea leads to a tournament between the colleges: the presence of such a pair of students is an indication that $c'$ ``scores a point'' over $c$ in the construction of a ranking of schools. Our results imply that such scoring will lead to schools that are highly ranked according to $\geq$, beating schools that are ranked below. 

Indeed, for a fixed utility function over schools,  Lemma~\ref{lem:paloapalo} in the proof of Theorem~\ref{thm:main} means that the scoring method we have just outlined is guaranteed to compare more pairs of schools correctly (meaning, rank them like $\geq$ does) than incorrectly (in violation of $\geq$). For example, a conservative bound for a student that applies to 5 schools translates into a guaranteed 33\% more correct than incorrect comparisons. This number is a theoretical worst-case guarantee: In real instances, we expect there to be many more correct than incorrect comparisons.

To formalize our tournament method, 
let $R(c',c)$ denote the set of students $i$ for which there is some student $j$ with $s_j<s_i$ so that $j$ applies to $c$ while $i$ applies to $c'$. This is in some sense the set of students who reveal $c'$ to be ranked above $c$.

Now suppose that $c'$ is ranked above $c$ if the number of students in $R(c',c)$ exceeds the number of students in $R(c,c')$. Effectively, we obtain a tournament graph between schools: a binary relation between schools describing when one school beats another according to the scoring method we have described. Now we may consider, for each school $c$, the number of schools that $c$ \emph{beats} in the tournament. Such a number gives us a ranking of all the schools.

\subsection{Data}\label{sec:datadesc}
We use a proprietary dataset with all GMAT tests taken in the U.S. between July 2006 and June 2015. The anonymized data from the Graduate Management Admission Council (GMAC), which administers the GMAT exam, comes as a set of \emph{score reports}. Each score report links a test taken by an anonymized test candidate, on a given date, to the uniquely identified school programs selected by the candidate. In the period covered by our data, candidates took the GMAT exam as a requirement for applying to any accredited full-time MBA (FTMBA) program. On completing the test, candidates select the school programs to receive their score reports - they can select up to five school programs at no additional cost.\footnote{Reports for each program selected after the test-day, or for programs in exceed of the five, cost an additional score reporting fee -- at that time \$28. Selections are not necessarily applications to a school, but are required by FTMBA programs as part of applications.} 

Our analysis draws on more than 1.7 million score reports that show how 490,000 prospective MBA students selected one or more schools out of a menu of 688 full-time MBA programs in the US. We observe the GMAT \emph{Total Score} for each candidate. (Candidates are represented with a unique record ID to preserve privacy). 
Test scores range from 200 to 800. (This is because we use observations from the generation of the GMAT exam which was discontinued on January 31, 2024. The current GMAT Focus Edition has a different score scale and distribution). The full database holds more than 3 million GMAT score reports, with scores from about 1.2 million GMAT tests completed in the nine July-to-June cycles (or test-years) that we study.

We focus on the score reports from US-based test-takers sent to full-time MBA programs (FTMBA) in the U.S., (and drop the balance of score reports that went to non-MBA master's programs, part-time MBA programs, Ph.D.\ programs, or foreign programs). The strong correlation between score-sending and school applications (with an $R^2$ of about 0.98 according to GMAC), gives us confidence in using where students select as an approximation to where they apply. The focus on U.S.\ FTMBA programs is motivated by their prevalence in the data, and the fact that as a relatively homogeneous group with a uniform requirement for the GMAT exam, they provide the clearest \emph{apples-to-apples} comparison for our goal: using candidates' program selections to infer aggregate school preferences. As a data cleaning exercise, we also drop FTMBA programs that received fewer than 122 score reports.\footnote{Dropping programs with fewer than 122 (or 61*2) score reports, avoids the noise created in estimates of $m$-measures by schools that could not possibly receive enough score reports from candidates to cover the range of 61 possible test scores.}

Some features of the data are notable, and affirm our approach. The first, shown in the top panel of Figure \ref{fig:scoredist2}, suggests that the test score distribution resembles a skewed normal distribution, with bunching at the test scores that appear to be key school admissions thresholds. The pattern of bunching is consistent with our rationale that test scores are a crucial part of the signal that candidates send to colleges and business schools. The top panel of Figure \ref{fig:scoredist2} also affirms this paper's claim that candidates with higher scores reach more for the most selective programs, as illustrated by the differences in the skew of the test score distributions linked to a high-reputation school like Harvard, versus a token sample of lower-ranked schools.

\ifnum\figmode=0
\begin{figure}[h]\caption{Grouped Test Score Distribution}
\label{fig:scoredist2}
\begin{center}
\includegraphics[width=0.95\columnwidth]{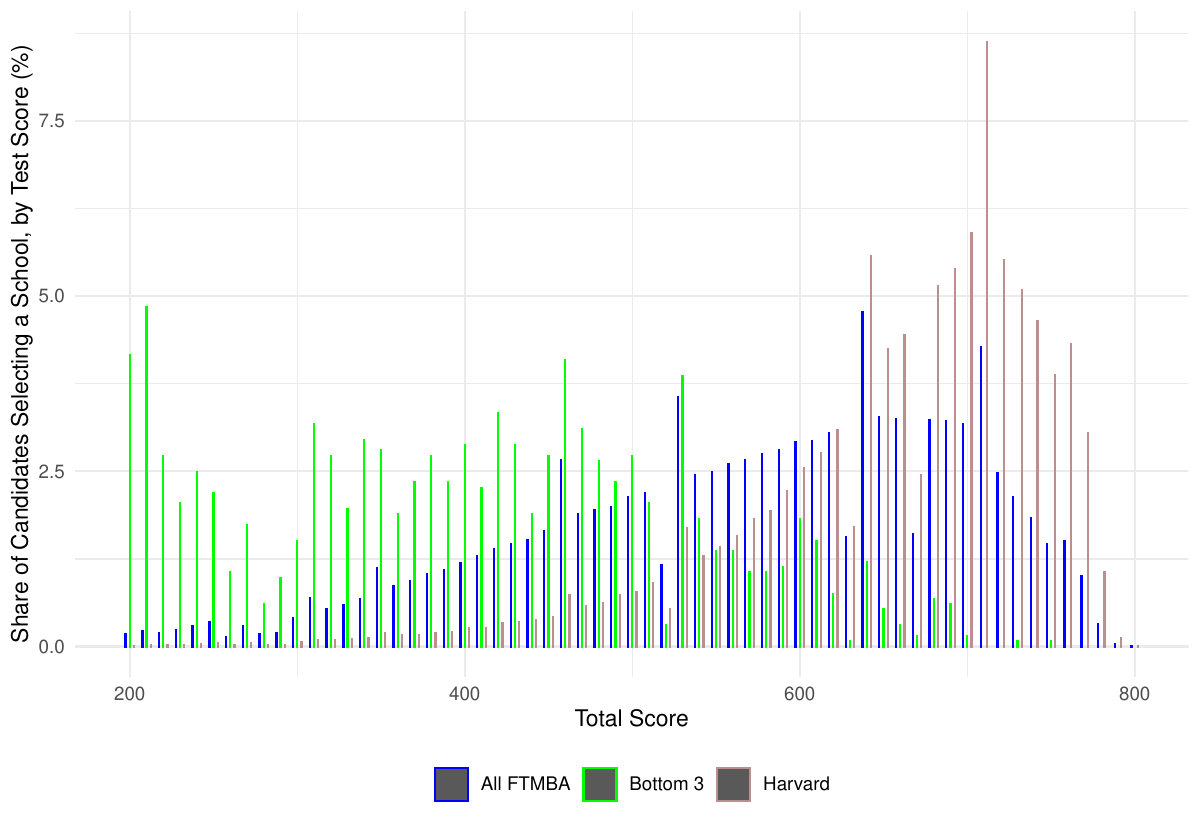}
\includegraphics[width=0.95\columnwidth]{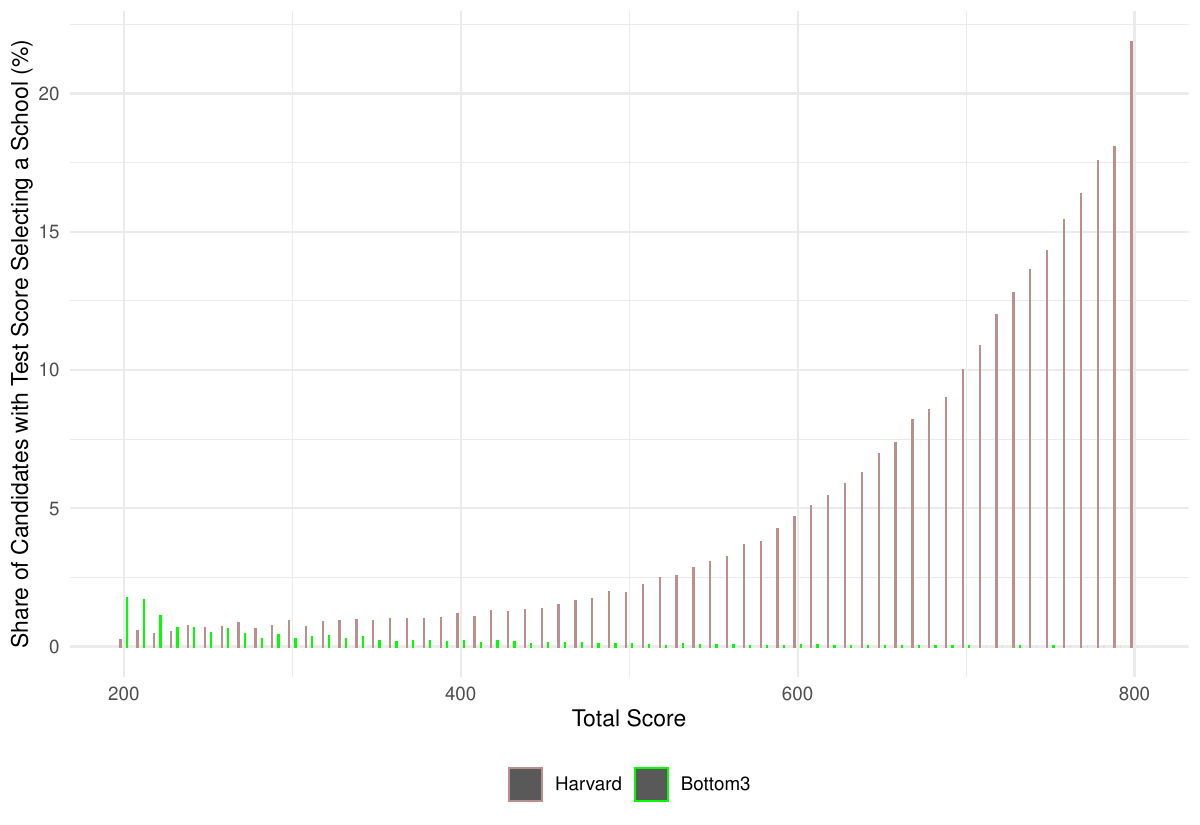}
\end{center}
\begin{minipage}[h]{.97\linewidth}\footnotesize{}{Note: The x-axis shows the range of GMAT Total Scores. The y-axis in the first panel shows the percentage of candidates in each group that received the test score on the x-axis. The first group represents all FTMBA program candidates, the second group is the subset of candidates that selected one of three low-ranked programs, the last group is the subset of candidates that selected Harvard, as one example of a top school. The second panel approximates a conditional probability distribution, showing the share of candidates with a given test score, that selected into one of the two sub-groups in panel 1.}\\
\footnotesize{}{Data Source: GMAC}
\end{minipage}
\end{figure}
\fi

The second panel of Figure \ref{fig:scoredist2} provides some evidence for the claim of monotonicity behind the measure we proposed in Section~\ref{sec:i1}. It also helps to illustrate how test scores can be useful for inferring the generic value of schools. While the first panel shows $P(S=s | c \in C)$, the second panel shows $P(C=c | s \in S)$, the fraction of students with score $s$ that select college $c$ (i.e., the empirical approximation to $g_s(c)$ in Section \ref{sec:i1}). It shows how, for a leading school, the fraction of candidates with the highest score is higher, and how the fraction of candidates decreases consistently as test scores fall, going from right to left. The set of low-reputation schools that we use as a contrast is not as monotonous. The main benefit of this plot of $P(C=c | S=s)$ is highlighting how, if we stacked the distributions for all schools, we eventually reach a uniform line of 100\% for each test score. The implication here is that if we started with a concave distribution for the top (or the group of the top) schools, the group of schools with lower reputations must, on average, have distributions that are less monotone.

The assortative matching described in Section \ref{sec:model}, and implied in related works \citep[e.g.,][]{macleod2017big} is observed to loosely hold in the data, illustrated in Figure \ref{fig:heatmap}. The heatmap in the figure represents the average scores received by schools on the vertical axis, and the average scores of candidates on the horizontal axis. The tiles are colored to show the share of candidates with each score, selecting a program in each bucket defined by a bin on the vertical axis. There are fewer candidates at the extremes of the horizontal axis (as shown in Figure \ref{fig:scoredist2}), so it is unsurprising that sorting is more obvious at the top and bottom of the score distribution. The rest of the score distribution shows the general pattern of positive assortative matching, despite the wide dispersal of candidates' selections in the middle of the score distribution.

\ifnum\figmode=0
\begin{landscape}
\begin{figure}[htbp]\caption{Heatmap of Candidate Total Scores and Program Average Scores}
\label{fig:heatmap}
\begin{center}
\includegraphics[width=0.9\columnwidth]{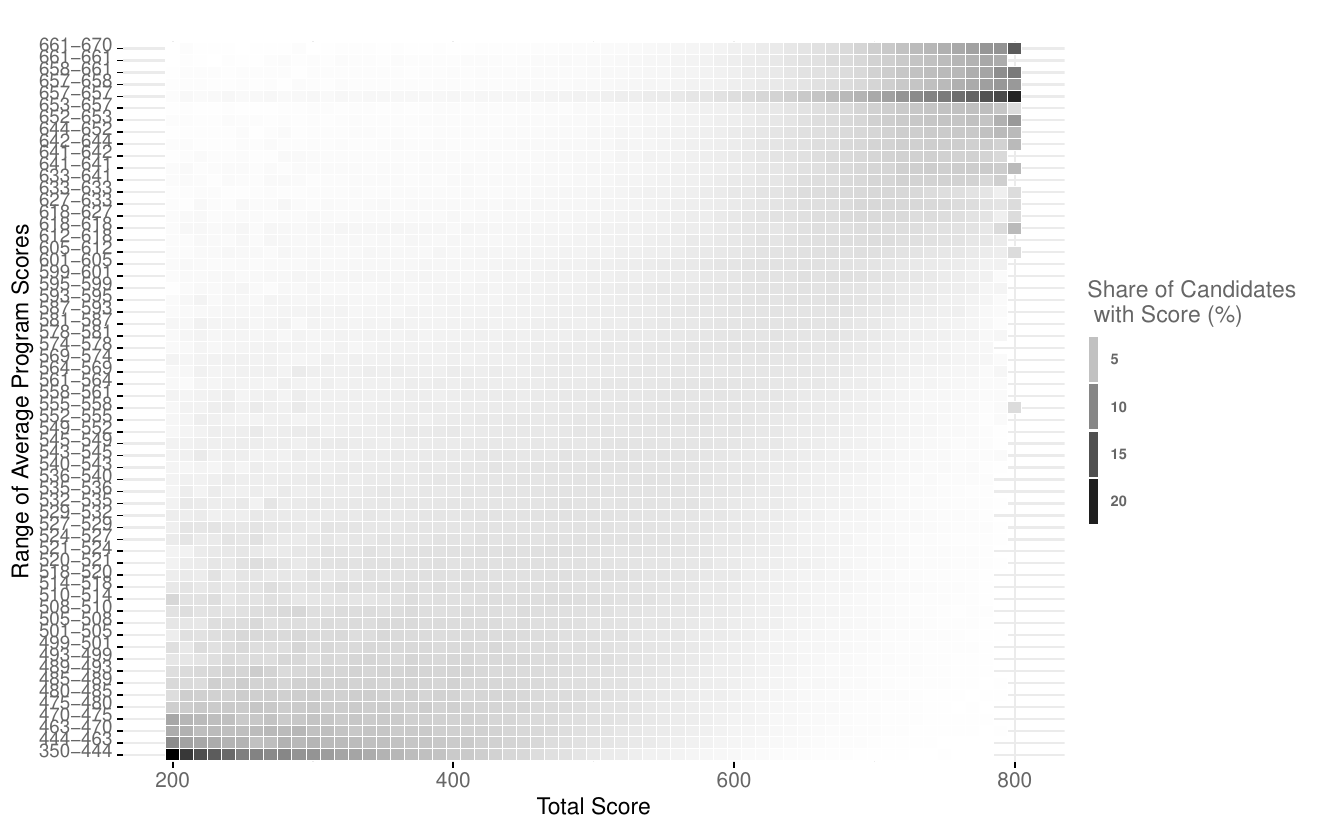}
\end{center}
\begin{minipage}[h]{.97\linewidth}
\footnotesize{}{Note: The graph shows the fraction of candidates with each test score that selected school program(s) in a score category. The score categories are defined by the average of GMAT Scores sent to the school programs. We set the number of categories to match the number of possible GMAT scores. Programs with fewer than 900 score reports are excluded.}\\
\footnotesize{}{Data Source: GMAC}
\end{minipage}
\end{figure}
\end{landscape}
\fi

\section{Results}\label{sec:res}
\subsection{The $m$ measure}
Table \ref{tab:t2} presents the rank ordering of full-time MBA Programs using the $m$ measure (based on the monotonicity approach described in Section \ref{sec:i1}). Given space constraints, only the top 35 schools are shown. The first column shows the $m$-values aggregated across all possible test scores, while the second column shows the same for the $m+$-values, which should be more robust, given that its construction is based on a logical test of whether the distribution of test scores for each college shows monotonicity, rather than the actual measure which may be sensitive to distortions of the distribution tied to features of candidates or other factors.

\ifnum\figmode=0

\begin{table}[!htbp] 
\begin{center}
  \caption{$M$ and $M+$ Measures} 
  \label{tab:t2} 
\resizebox{0.99\textwidth}{!}{
{\footnotesize \begin{tabular}{@{\extracolsep{5pt}} clrr} 
\\[-1.8ex]\hline 
\hline \\[-1.8ex] 
 & University &  $m$ value & $m+$ value \\ 
\hline \\[-1.8ex] 
1 & Harvard University & $60,840.42$ & $3,612$ \\
2 & Stanford University & $51,376.91$ & $3,588$ \\
3 & University of Pennsylvania & $41,251.79$ & $3,560$ \\
4 & Columbia University & $31,595.05$ & $3,536$ \\
5 & Massachusetts Institute of Technology (MIT) & $29,497.54$ & $3,498$ \\
6 & Northwestern University & $27,297.17$ & $3,552$ \\
7 & University of Chicago & $23,775.61$ & $3,502$ \\
8 & University of California - Berkeley & $19,813.34$ & $3,492$ \\
9 & New York University & $17,603.50$ & $3,340$ \\
10 & Duke University & $11,388.95$ & $3,352$ \\
11 & Dartmouth College & $10,736.16$ & $3,518$ \\
12 & University of California - Los Angeles & $9,972.73$ & $3,122$ \\
13 & Yale University & $9,498.43$ & $3,258$ \\
14 & University of Virginia & $8,207.52$ & $3,190$ \\
15 & University of Michigan - Ann Arbor & $8,098.09$ & $3,332$ \\
16 & University of Texas - Austin & $5,615.18$ & $2,876$ \\
17 & Cornell University & $4,467.38$ & $3,114$ \\
18 & Georgetown University & $4,204.24$ & $2,948$ \\
19 & University of Southern California & $3,638.97$ & $2,708$ \\
20 & University of North Carolina - Chapel Hill & $3,523.65$ & $2,802$ \\
21 & Carnegie Mellon University & $2,870.80$ & $3,108$ \\
22 & Northwestern University$^*$ & $2,213.34$ & $3,020$ \\
23 & University of Notre Dame & $1,729.40$ & $2,702$ \\
24 & University of Washington - Seattle & $1,555.86$ & $2,614$ \\
25 & Vanderbilt University & $1,145.81$ & $2,432$ \\
26 & Boston College & $1,077.92$ & $2,468$ \\
27 & University of Minnesota - Duluth & $1,069.98$ & $1,406$ \\
28 & Indiana University - Bloomington & $992.51$ & $2,452$ \\
29 & Boston University & $984.37$ & $2,420$ \\
30 & Emory University & $855.64$ & $2,426$ \\
31 & Brigham Young University & $852.71$ & $2,334$ \\
32 & Washington University in St. Louis & $723.16$ & $2,676$ \\
33 & University of Minnesota - Twin Cities & $657.06$ & $2,392$ \\
34 & University of Wisconsin - Madison & $584.12$ & $2,366$ \\
35 & Babson College & $569.23$ & $2,348$ \\
\hline
\end{tabular} }
}   
\end{center} 
\raggedright{\footnotesize{}{The table shows our $m$-measure, based on the fraction of students with each test score that selected a school as part of the GMAT exam, $m(c) = \sum_{s,s'\in S}(g_{s'}(c) - g_s(c))(s'-s)$. We show the composite values of $m$ as well as the $mp(c) = \sum_{s,s'\in S}\left[(g_{s'}(c) - g_s(c))(s'-s)\right]>0$ for the top 35 programs. The table is ordered by the $m$ measure. $^*$ Programs within each school are ranked separately, this line shows the one-year accelerated FTMBA program for Northwestern University, not the regular 2-year FTMBA program on line 6.}}
\end{table}
\fi

A remarkable feature of the rankings in Table~\ref{tab:t2} is how closely they resemble the existing expert rankings. The top 7 programs for column 1 in the table are consistently also the top 7 for the annual  USNWR rankings, if one aggregates the latter over nine years of data. The correlation between the top 50 and the corresponding 2018 USNWR rankings is, for the $m$ and $m+$ measures, 0.92 and 0.94 respectively.

The two measures that we report in Table~\ref{tab:t2} are remarkably consistent, suggesting that our method is robust. The rank correlation for the $m$-measure and the $m+$-measure is 0.68 (which increases to more than 0.9 when limited to the top 50). The correlation indicates that the rankings are not unduly sensitive to the selected $m$-measure. The robustness of the results comes in part from using the sum of the measure across all $s$ reference scores. In theory, the reference score used should not matter if all college conditional score distributions were strictly monotone, but in practice, summing across reference scores reduces noise as the number of reports gets smaller and the distributions less strictly monotone. 

Nevertheless, we attempt another approach to estimating the underlying ranking that is more computationally intensive, but is potentially a robustness check on the results in Table \ref{tab:t2}.

\subsection{Tournament Ranking}

Table \ref{tab:t5} presents the rank ordering of full-time MBA Programs, using the tournament approach given in Section \ref{sec:i2}. Specifically, we counted the number of times, in a hypothetical `matchup' between any two schools, candidates with the higher test score in a given year selected one school, but not the other one. In principle, given the full set of school choice portfolios by candidates, a vector of length $\sum_{i \in I} k_i$ (because each candidate $i$ selects a portfolio of length $k_i$), we construct the set of pairings of length $(\sum_{i \in I} k_i)^2$. Convolving all portfolio choices, limited to choices within the same GMAT test year for tractability, effectively lists every opportunity students had to choose either the first or second school in a pairing. We can then identify the `winner' of each pairing as the one with a greater number of instances where candidates with a higher score selected that school but not the other one. The last column of the table (labeled wins) shows the number of `match-ups' out of the possible 687 that each school won. For example, for Stanford's FTMBA program, out of its 687 match-ups in the tournament, 684 times more high-scoring portfolios in the same (July-June) year included Stanford without selecting the other school than the converse. Candidates that selected both schools in a match-up were dropped from the count.

\ifnum\figmode=0
\begin{table}[!htbp] 
\begin{center}
      \caption{School Programs Tournament Rank}
  \label{tab:t5}
\resizebox{0.85\textwidth}{!}{
{\footnotesize \begin{tabular}{@{\extracolsep{5pt}} clr}
\\[-1.8ex]\hline
\hline \\[-1.8ex]
Rank & University & wins \\
\hline \\[-1.8ex]
1 & Harvard University & $687$ \\
2 & Massachusetts Institute of Technology (MIT) & $686$ \\
3 & University of Pennsylvania & $685$ \\
4 & Stanford University & $684$ \\
5 & University of Chicago & $683$ \\
6 & Northwestern University & $682$ \\
7 & Dartmouth College & $681$ \\
8 & Columbia University & $680$ \\
9 & University of California - Berkeley & $679$ \\
10 & Yale University & $678$ \\
11 & Duke University & $677$ \\
12 & University of Virginia & $676$ \\
13 & University of Michigan - Ann Arbor & $675$ \\
14 & Cornell University & $674$ \\
15 & Carnegie Mellon University & $673$ \\
16 & Cornell University$*$ & $672$ \\
17 & New York University & $671$ \\
18 & Northwestern University$*$ & $670$ \\
19 & University of California - Los Angeles & $669$ \\
20 & Georgetown University & $668$ \\
21 & University of Texas - Austin & $667$ \\
22 & University of North Carolina - Chapel Hill & $666$ \\
23 & University of Southern California & $664$ \\
24 & Washington University in St. Louis & $663$ \\
25 & University of Notre Dame & $663$ \\
26 & Brandeis University & $662$ \\
27 & Emory University & $660$ \\
28 & University of Washington - Seattle & $659$ \\
29 & University of Michigan - Ann Arbor$**$ & $659$ \\
30 & Boston College & $657$ \\
31 & Indiana University - Bloomington & $656$ \\
32 & Boston University & $655$ \\
33 & Duke University$**$ & $655$ \\
34 & Vanderbilt University & $654$ \\
35 & University of Wisconsin - Madison & $653$ \\
\hline \\[-1.8ex]
\end{tabular} }
}
\end{center}
\raggedright{\footnotesize{}{The table shows the top 35 FTMBA programs from a tournament ranking approach. All 688 MBA programs are matched against one another, with programs notching a point for each candidate that selected the program but did not select the other one, when at least one person with a higher score selected the other program. Having more points in each of the 687 match-ups counts as a win for a school program. Section \ref{sec:i2} explains the details of the match-ups.
$^*$ shows one-year accelerated FTMBA programs, (not the regular 2-year FTMBA), while $^{**}$ shows the Cross-Continent/Global MBA program for the University.}}
\end{table}
\fi

Table \ref{tab:t5} offers an improvement over a ranking based simply on how many times candidates selected a school, because the simple count does not capture how candidates weigh the selectivity of schools. The approach also represents an improvement over taking school average test scores of accepted candidates, as schools can bias rankings in their favor by selecting only candidates with high test scores. Even if one used the average tests of all candidates that selected a school program, there is the risk for programs receiving a low number of test scores, that the occasional idiosyncratic selection of the program by a high-scoring candidate distorts or adds noise to the measure. 

The 
approach in Table \ref{tab:t5} builds on the model in Section \ref{sec:model}, 
and reflects our theoretical connection between
high-scoring candidates expecting better chances of admission and therefore choosing to apply to more selective schools. 
We can learn how candidates value schools from comparing the schools candidates selected, against the ones that candidates did not select, while filtering out score reports that, on average, do not have the same level of confidence in their options for admission.  

The top school programs in Table \ref{tab:t5} also closely resemble the rankings from the expert opinion sources, as well as the rankings from the $m$ measures (Section~\ref{sec:i1}). The Spearman rank correlation coefficient between the ranking in Table~\ref{tab:t5} and the 2018 USNWR ranking for the top 50 schools is 0.88, and the corresponding coefficients for the two $m$-measures in Table \ref{tab:t2} are 0.92 and 0.94 respectively. Outside the top 50, the correlation between the different ranking systems drops off.\footnote{Note that because the USNWR reports rankings for universities, not specific programs, we keep only the standard 2-year full-time MBA, one-per-university, for the estimates of correlation with the USNWR rankings.}

A major advantage for revealed preference rankings based on candidate selections is that we can rank all programs with little effort, not just the top 50 or 100, as is usual for the US News Report and other sources of expert opinion. Another advantage, as shown in previous work \citep[c.f.,][]{averyetalRPranking}, is that we can go beyond estimates of the value of a school to a generic student, by tailoring the approaches to subsets of student candidates that may have reasons to value schools differently.

\subsection{Beyond Generic Rankings}
Table \ref{tab:maj} shows separate $m$ measure-rankings for two sub-groups of candidates: business/economics majors and other majors. We can rationalize this categorization. While MBA enrollment is almost uniformly motivated by the prospects of better careers and incomes, it is a different kind of career transition for business and economics majors, who usually come from jobs in accounting or consulting roles, compared with engineering majors or others for whom the MBA is a transition from operational roles to management or to consulting roles. There is no reason \emph{ab initio} to expect different rankings for these two groups, given how all well-known schools offer general business administration as the most common track for admitted students. Nevertheless, where schools offer specializations in marketing, accounting, finance or other functions that are easier for economics and business majors to take, candidates from those majors may reveal a preference for schools with those specialization tracks.\footnote{The 42 majors that GMAC allows candidates to choose, in describing their undergraduate degrees include: Accounting, Finance, Management, Economics and six other related majors that we code as business/economics majors. The remaining, with Engineering as the largest sub-group, but including blanks, and selections of ``My major or field of study is not shown here" are coded as Other.}


\ifnum\figmode=0
\begin{table}[!htbp]
    \begin{center}
    \caption{Rankings by Undergrad Major Categories}
\resizebox{0.99\textwidth}{!}{
\begin{tabular}{@{\extracolsep{5pt}} clrrc}
\\[-1.8ex]\hline
\hline \\[-1.8ex]
 & University & Business-Major & Other & Rank 2 \\
\hline \\[-1.8ex]
1 & Harvard University & $54,895.85$ & $60,355.61$ & $1$ \\
2 & Stanford University & $51,829.62$ & $50,819.01$ & $2$ \\
3 & University of Pennsylvania & $41,695.77$ & $39,719.20$ & $3$ \\
4 & Columbia University & $30,825.70$ & $30,616.29$ & $5$ \\
5 & Massachusetts Institute of Technology (MIT) & $27,520.64$ & $32,092.05$ & $4$ \\
6 & Northwestern University & $26,933.38$ & $26,251.89$ & $6$ \\
7 & University of Chicago & $26,280.87$ & $21,219.54$ & $7$ \\
8 & New York University & $23,097.43$ & $16,029.62$ & $9$ \\
9 & University of California - Berkeley & $22,923.71$ & $20,524.71$ & $8$ \\
10 & Duke University & $10,441.33$ & $11,335.77$ & $10$ \\
11 & Dartmouth College & $10,259.33$ & $10,478.46$ & $11$ \\
12 & University of California - Los Angeles & $10,029.10$ & $9,312.07$ & $13$ \\
13 & University of Virginia & $9,498.72$ & $7,232.24$ & $15$ \\
14 & Northwestern University$^*$ & $8,789.46$ & $781.92$ & $28$ \\
15 & Yale University & $8,640.99$ & $9,791.30$ & $12$ \\
16 & University of Michigan - Ann Arbor & $6,556.73$ & $8,377.74$ & $14$ \\
17 & University of Texas - Austin & $5,630.72$ & $5,543.51$ & $16$ \\
18 & Georgetown University & $4,714.83$ & $3,394.81$ & $18$ \\
19 & University of North Carolina - Chapel Hill & $4,153.64$ & $2,725.33$ & $21$ \\
20 & Cornell University & $4,144.41$ & $4,675.49$ & $17$ \\
21 & University of Southern California & $3,974.11$ & $2,975.03$ & $20$ \\
22 & University of Notre Dame & $2,189.31$ & $1,310.22$ & $24$ \\
23 & Carnegie Mellon University & $2,018.70$ & $3,387.08$ & $19$ \\
24 & Vanderbilt University & $1,609.70$ & $638.18$ & $30$ \\
25 & University of Washington - Seattle & $1,382.98$ & $1,455.37$ & $22$ \\
26 & Indiana University - Bloomington & $1,238.08$ & $605.37$ & $31$ \\
27 & Brigham Young University & $1,236.05$ & $323.79$ & $36$ \\
28 & Boston College & $1,079.23$ & $953.53$ & $26$ \\
29 & Emory University & $1,076.24$ & $546.23$ & $33$ \\
30 & University of Wisconsin - Madison & $806.91$ & $328.44$ & $35$ \\
31 & University of Minnesota - Twin Cities & $741.67$ & $518.69$ & $34$ \\
32 & Washington University in St. Louis & $726.56$ & $729.35$ & $29$ \\
33 & Boston University & $553.75$ & $1,266.39$ & $25$ \\
34 & University of Rochester & $506.49$ & $230.83$ & $37$ \\
35 & Southern Methodist University & $381.45$ & $-229.04$ & $275$ \\
\hline \\[-1.8ex]
\end{tabular}
}
    \label{tab:maj}
    \end{center}
\raggedright{\footnotesize{}{
The table shows our $m$ measure, based on the schools students selected, calculated separately for broad groups of undergraduate majors. Business majors, the first column,  includes economics, and all business undergraduate degrees, while the `Other` column holds all other undergraduate majors. We show the composite values for the programs ranked as top 35. The table is ordered by the values of the Business majors group. The last column shows the rank for the category holding other undergrad majors. $^*$ This line refers to the one-year accelerated FTMBA program for the school. See Table \ref{tab:t2} notes.}}
\end{table}
\fi

The Spearman's rank correlation coefficient for the different rank orderings in Table \ref{tab:maj} is 0.93, implying that the rankings for business and econ majors broadly resemble those obtained for candidates with other undergraduate academic backgrounds (using the approach in Section \ref{sec:i1}). 
A more detailed grouping of majors is used to prepare the rankings in Table \ref{tab:tall} in the Appendix. Using ten groups instead of 2 shows the broad pattern of similarities in how different groups of students value schools, as well as the notable differences -- between economics and engineering, for example.

Table \ref{tab:tctzn} shows differences in the school rankings for international students compared to US citizens. To be clear, the data we use show only tests taken by US-based students, but of these students, about 18.4\% are non-citizens. Just as for the table showing separate rankings for econ/business majors vs. others, this table shows how international students and citizens see differences in the reputations of schools. The last column shows the ranking generated for non-citizens. 

\ifnum\figmode=0
\begin{table}[!htbp]
    \begin{center}
    \caption{Rankings: Citizens and Internationals}
\resizebox{0.99\textwidth}{!}{
\begin{tabular}{@{\extracolsep{5pt}} clrrc}
\\[-1.8ex]\hline
\hline \\[-1.8ex]
 & University & US Citizens & Noncitizens & Rank 2 \\
\hline \\[-1.8ex]
1 & Harvard University & $60,580.74$ & $54,401.89$ & $1$ \\
2 & Stanford University & $51,069.98$ & $46,000.91$ & $3$ \\
3 & University of Pennsylvania & $39,431.28$ & $46,980.27$ & $2$ \\
4 & Columbia University & $29,992.23$ & $35,849.85$ & $4$ \\
5 & Massachusetts Institute of Technology (MIT) & $27,995.47$ & $32,292.72$ & $5$ \\
6 & Northwestern University & $27,360.08$ & $24,555.17$ & $7$ \\
7 & University of Chicago & $22,620.66$ & $27,787.66$ & $6$ \\
8 & University of California - Berkeley & $19,559.78$ & $19,939.65$ & $8$ \\
9 & New York University & $16,586.46$ & $18,910.66$ & $9$ \\
10 & Duke University & $11,409.93$ & $9,806.70$ & $11$ \\
11 & Dartmouth College & $11,084.70$ & $8,125.19$ & $13$ \\
12 & University of California - Los Angeles & $10,326.46$ & $8,480.00$ & $12$ \\
13 & Yale University & $9,323.67$ & $10,488.48$ & $10$ \\
14 & University of Virginia & $9,036.48$ & $5,036.51$ & $17$ \\
15 & University of Michigan - Ann Arbor & $7,900.16$ & $7,396.70$ & $14$ \\
16 & University of Texas - Austin & $5,992.27$ & $3,781.67$ & $18$ \\
17 & Georgetown University & $4,866.80$ & $1,687.01$ & $23$ \\
18 & Cornell University & $4,046.25$ & $6,530.12$ & $15$ \\
19 & University of Southern California & $3,823.35$ & $2,603.73$ & $20$ \\
20 & University of North Carolina - Chapel Hill & $3,428.01$ & $2,787.93$ & $19$ \\
21 & Carnegie Mellon University & $2,286.97$ & $5,670.38$ & $16$ \\
22 & University of Notre Dame & $1,980.46$ & $679.57$ & $30$ \\
23 & Northwestern University$^*$ & $1,932.38$ & $2,191.48$ & $21$ \\
24 & University of Washington - Seattle & $1,859.85$ & $686.09$ & $29$ \\
25 & Vanderbilt University & $1,366.04$ & $42.27$ & $51$ \\
26 & Boston University & $1,265.37$ & $477.63$ & $32$ \\
27 & Brigham Young University & $1,260.17$ & $-916.38$ & $555$ \\
28 & Boston College & $1,137.64$ & $862.89$ & $27$ \\
29 & University of Minnesota - Duluth & $1,113.18$ & $-136.19$ & $252$ \\
30 & Indiana University - Bloomington & $973.45$ & $1,300.68$ & $25$ \\
31 & Washington University in St. Louis & $877.13$ & $384.08$ & $35$ \\
32 & University of Minnesota - Twin Cities & $851.28$ & $164.90$ & $42$ \\
33 & University of Wisconsin - Madison & $700.24$ & $105.37$ & $46$ \\
34 & University of Colorado Boulder & $664.94$ & $-727.57$ & $526$ \\
35 & Babson College & $651.89$ & $330.02$ & $38$ \\

\hline \\[-1.8ex]
\end{tabular}
}
    \label{tab:tctzn}
    \end{center}
\raggedright{\footnotesize{}{
The table shows our $m$ measure calculated separately for US citizens and international students. The table is ordered by the values for citizens. The last column shows the rank for international students. $^*$ This line refers to the one-year accelerated FTMBA program for the school. See Table \ref{tab:t2} notes.}}
\end{table}
\fi

There are good reasons to expect international students to evaluate the reputations of business schools differently. First, as non-citizens, many who want to work in the US after their studies are seeking locations with clusters of employers that hire. Larger urban centers with an international reputation may also be favored by non-citizen students, who may want degrees from institutions that are known in their home countries. 

The results in Table \ref{tab:tctzn} reflect the suggested differences in interest. The Spearman's rank correlation between the ordering of schools by citizens and non-citizens is 0.71. Much of the differences in the rankings occur outside of the top 15, arguably because of the strong overlap in shared interests across the two groups for the most highly ranked schools. 

The instances with large differences in rankings are instructive, in suggesting the kinds of schools that appeal or do not appeal to foreign students. The three schools ranked outside of the top 100 by non-citizens, but in the top 35 by citizens, are all located in small cities away from major metropolises. These differences in rankings between citizens and non-citizens are another advantage of using approaches like in this paper, that allow for deviations from a generic ranking on schools, while revealing the preferences of past candidates from their choices and scores.

To address the idea that school reputations change over time (which arguably explains why expert-opinion school rankings are updated annually), Table \ref{tab:prd} shows rankings for tests taken before July 1, 2011 (Early period) compared with tests taken after (Later period). The simplified grouping into two periods can be generalized into narrower time-splits, but for this paper, two periods is enough to illustrate the principle. As in previous Tables, we use the approach in Table \ref{tab:t2} to check for the monotonicity of selections by score for subsets of the data.

\ifnum\figmode=0
\begin{table}[!htbp]
    \begin{center}
    \caption{Rankings: Early and Later Periods}
\resizebox{0.99\textwidth}{!}{
\begin{tabular}{@{\extracolsep{5pt}} clrrc}
\\[-1.8ex]\hline
\hline \\[-1.8ex]
 & University & Later & Early & Rank 2 \\
\hline \\[-1.8ex]
1 & Harvard University & $61,625.55$ & $62,316.47$ & $1$ \\
2 & Stanford University & $53,594.05$ & $50,939.35$ & $2$ \\
3 & University of Pennsylvania & $41,243.99$ & $41,514.80$ & $3$ \\
4 & Massachusetts Institute of Technology (MIT) & $33,887.49$ & $26,017.71$ & $5$ \\
5 & University of Chicago & $28,451.34$ & $20,297.96$ & $7$ \\
6 & Columbia University & $28,422.28$ & $35,225.33$ & $4$ \\
7 & Northwestern University & $27,817.83$ & $25,461.07$ & $6$ \\
8 & University of California - Berkeley & $20,807.98$ & $18,655.18$ & $9$ \\
9 & New York University & $15,737.72$ & $18,846.39$ & $8$ \\
10 & Duke University & $13,324.78$ & $9,575.18$ & $11$ \\
11 & Dartmouth College & $12,352.50$ & $9,118.52$ & $13$ \\
12 & Yale University & $9,570.94$ & $9,423.80$ & $12$ \\
13 & University of Michigan - Ann Arbor & $8,907.03$ & $6,913.51$ & $15$ \\
14 & University of California - Los Angeles & $8,547.90$ & $10,689.26$ & $10$ \\
15 & University of Virginia & $8,476.41$ & $7,938.16$ & $14$ \\
16 & University of Texas - Austin & $5,856.30$ & $5,402.97$ & $16$ \\
17 & Cornell University & $3,971.39$ & $4,795.05$ & $17$ \\
18 & Georgetown University & $3,581.05$ & $4,524.20$ & $18$ \\
19 & Northwestern University$^*$ & $3,322.52$ & $904.00$ & $29$ \\
20 & University of North Carolina - Chapel Hill & $3,222.41$ & $3,720.43$ & $20$ \\
21 & University of Southern California & $3,015.49$ & $3,984.26$ & $19$ \\
22 & Carnegie Mellon University & $2,738.56$ & $2,925.25$ & $21$ \\
23 & Vanderbilt University & $1,720.93$ & $842.10$ & $30$ \\
24 & University of Washington - Seattle & $1,632.21$ & $1,496.33$ & $24$ \\
25 & University of Notre Dame & $1,538.99$ & $1,820.17$ & $23$ \\
26 & Emory University & $1,280.38$ & $637.33$ & $34$ \\
27 & Indiana University - Bloomington & $1,032.00$ & $967.72$ & $28$ \\
28 & Boston College & $904.16$ & $1,219.73$ & $25$ \\
29 & University of Minnesota - Twin Cities & $775.52$ & $588.47$ & $35$ \\
30 & Boston University & $752.64$ & $1,177.05$ & $26$ \\
31 & Washington University in St. Louis & $737.05$ & $740.48$ & $32$ \\
32 & Brigham Young University & $645.98$ & $995.83$ & $27$ \\
33 & Cornell University$^**$ & $492.16$ & $567.24$ & $36$ \\
34 & University of Wisconsin - Madison & $380.31$ & $761.35$ & $31$ \\
35 & University of North Carolina - Chapel Hill & $363.10$ & $-34.46$ & $94$ \\
\hline \\[-1.8ex]
\end{tabular}
}
    \label{tab:prd}
    \end{center}
\raggedright{\footnotesize{}{
The table shows our $m$ measure  calculated separately for tests taken before July 1, 2011 (Early) and after (Later). The table is ordered by the values for more recent scores. The last column shows the rank for the early period. $^*$ This line refers to the one-year accelerated FTMBA program for the school. See Table \ref{tab:t2} notes.}}
\end{table}
\fi

The results in Table \ref{tab:prd} imply, as seen for rankings like the USNWR, that the reputations of schools are relatively stable over time. The Spearman's rank correlation coefficient between the two periods is 0.86, which suggests some differences over time, but no major realignment of the order of school reputations. The most notable move is the climb of UNC Chapel Hill from 94 in the early period to 35 in the later period. The few cases getting worse rankings in the later period are slight losses of five places (by Brigham Young), four places (from 26 to 30 by Boston University, 10 to 14 by UCLA), three places (Boston College from 25 to 28, U Wisconsin from 31 to 34) as well as other drops by two places or one in the rankings.

\subsection{Robustness Checks}
Given that about 22\% of candidates in our data took the test more than once, Table \ref{tab:t4} repeats Table \ref{tab:t2}, using the selections made during the attempt with the best score for candidates with more than one test attempt. The aggregate ranking is based on the method in Table \ref{tab:t2}, summing the $m$-measures created for each school's program across all the possible scores (limited to the score reports for each candidates' identified best attempt). Once again, the ranking resembles the outputs in Table \ref{tab:t2} (with a correlation coefficient of 0.98) as well as commonly-used expert rankings. 

The orderings do not match exactly across the two tables. The differences imply two things: First, that as students learn about their desirability to schools through test-taking, the portfolio of schools that they select changes, as shown in related papers \citep[e.g.,][]{alishorrer24,goodman2016learning}. Other admission tests have a similar pattern of candidates retaking tests \citep{goodman2020take}. Second, the updating of school portfolios by candidates as they take the GMAT exam appears to capture both candidates expectations of the selectiveness of schools, based on test scores, as well the candidates' own idiosyncratic preferences, which makes the case for rankings tailored to subgroups of candidates.

\section{Discussion}\label{sec:discussion}

A student faces a portfolio choice problem of choosing $k$ schools to select at the application stage. We show that the optimal portfolio choice with a budget of $k$ applications has a recursive structure and may be represented as
\[V(k) = \max\{p(c)v(c) + (1-p(c)) V(k-1):c\in C \}.
\] Here $V(k)$ and $V(k-1)$ are the values (or indirect utilities) of choosing an optimal portfolio with, respectively, $k$ and $k-1$ schools. The recursive nature of the problem means that a student can choose an optimal $k$th school that is at least as good as any of the schools in the optimal $k-1$ portfolio, and then use the solution to the $k-1$-portfolio problem to find the remaining schools to apply to.

Figure~\ref{fig:theory} illustrates the problem of choosing the $k$th optimal school, given that $V=V(k-1)$. The diagram has the probability of admissions of each school $c$, $p(c)$, on the horizontal axis, and the utility from the school, $v(c)$, on the vertical axis. The figure assumes (but our theory does not) that the set of feasible schools is represented on a smooth concave curve. The indifference curves of a utility function $(p,v)\mapsto pv+(1-p)V$ are represented in violet. Indeed, we only draw a single indifference curve: the one that is tangent to the blue curve of feasible probability -- utility points.

\ifnum\figmode=0
\begin{figure}[h]
\begin{tikzpicture}
    \begin{axis}[
        axis lines=middle, 
        xmin=-0.05, xmax=3.5, 
        ymin=-0.05, ymax=3.5,  
        samples=100,
        domain=0:3,
        xlabel={\footnotesize $p(c)$},xlabel style={below},
        ylabel={\footnotesize $v(c)$},ylabel style={left},
        clip=false,
        xtick=\empty, ytick=\empty  
    ]
        \addplot[blue, thick, domain=0:sqrt(5)] {max(0, 2 - 0.4*x^2)}; 
        
        \pgfmathsetmacro\xone{0.6}
        \pgfmathsetmacro\xtwo{1.6}
        \pgfmathsetmacro\fone{2 - 0.4*\xone^2}
        \pgfmathsetmacro\ftwo{2 - 0.4*\xtwo^2}
        \pgfmathsetmacro\slopeone{-0.8*\xone}
        \pgfmathsetmacro\slopetwo{-0.8*\xtwo}
        
        \addplot[red, dashed, domain=0:1.4] {max(0, \fone + \slopeone*(x-\xone) + 0.2*(x-\xone)^2)} node[above right] {\footnotesize $p(c)v(c) + (1-p(c)) V'$};
        \addplot[violet, dashed, domain=1.0:2.4] {max(0, \ftwo + \slopetwo*(x-\xtwo) + 0.2*(x-\xtwo)^2)} node[above right] {\footnotesize $p(c)v(c) + (1-p(c)) V$};
        
        \addplot[black, only marks, mark=*] coordinates {(\xone,\fone) (\xtwo,\ftwo)};
    \end{axis}
\end{tikzpicture}
\caption{Choosing the $k$th school.}\label{fig:theory}
\end{figure}
\fi

Key to our results is a comparative statics property. If the value of choosing the ``remaining'' schools increases from $V$ to $V'>V$, then the value of the chosen school for the $k$th position increases while the probability of admission decreases. The diagram in Figure~\ref{fig:theory} illustrates our comparative statics argument as a shift from the red to the violet indifference curve: An increase in $V'$ means that the indifference curves in $(p,v)$ space become steeper.  The new tangency point then occurs for a lower value of $p$ and a larger value of $v$.

Why is the comparative statics property important? It translates into a comparative statics property of the value chosen for the school when a student increases their test score. A student with a larger test score will recursively have larger utilities under their optimal portfolios: the values of $V(1), V(2),\ldots, V(k)$. These will translate into a first-order stochastically dominating change in the utilities $v$ from each school. Because such increases in utilities must be associated with lower values of $p$, they must translate into a first-order stochastically dominating shift in selectivity of the school in the optimal portfolio. Finally, the comparative statics says that students with better test scores will, \textit{ceteris paribus}, apply more aggressively --- to more selective schools --- than will  students with lower test scores. This comparative statics is at the heart of our Theorem~\ref{thm:main} and the methods in our paper.

At this point, with the previous argument as background, it is fruitful to discuss the assumption of duplicate colleges. In a model with finitely many schools, the diagram in Figure~\ref{fig:theory} would feature a punctuated version of the blue line. Instead of a continuous curve, we would see finitely many pairs $(p,v)$ arranged in a decreasing sequence.\footnote{Of course one could have $(p,v)$ and $(p',v')$ with $p<p'$ and $v'<v$ but such dominated schools would not be chosen by the particular student in question, and can safely be ignored. Such schools might not be dominated for a student with a different utility function, but this consideration is not relevant for the current discussion which is centered on the choices of a particular individual student.} If these finitely many points were sparse enough, and the increase in the slope of the indifference curve were small enough, then it is possible that the new optimal point is at the same college as before. If that were to happen, the student would choose to apply to the same college twice --- which would not make sense unless there are duplicate colleges. Hence our assumption of college duplicates. At the same time, the figure and our argument also illustrate that such a situation is most likely empirically irrelevant. 

Finally, we discuss some of the strengths of our model. We make no assumptions on $\la$, which translates into completely flexible unobserved heterogeneity on the part of students. There is presumably a common value component to students' preferences, but this is exclusively captured by the order $\leq$ on colleges. So $c<c'$ if it is harder for any student to get into $c'$ than into $c$. The achievement in Theorem~\ref{thm:main} is that we are able to identify (or partially identify) this order purely from the application data by score.

The model makes strong assumptions on the heterogeneity of admission thresholds and likelihoods. We assume a common distribution $F$ across colleges.

\section{Conclusion}\label{sec:conc}
School reputations matter in the grand scheme of how universities serve society at large. When candidates compete for places with the highest single-index reputation ranks, several things happen: universities face perverse incentives to dress up their rankings, so that they are seen as providing greater value to prospective students, at the same time, reputations -- fairly earned or not, lead to candidates self-sorting into schools. Others have described the outcome as the {\em `Big Sort'}, with high-scoring students choosing higher-ranked universities, which leads to higher-paying jobs and wage growth in the long-run for the selected few. This stratification in part reflects reputational concerns (not just a preference for good peers and positive learning externalities). There is a real risk that the perverse incentives faced by schools, and the real power that the experts creating the rankings have to select the measures that go into the single-index rank, clouds the information acquisition process for candidates considering schools.

We point to a rich information set that is relatively free of bias in estimating the perceived value of schools to prospective candidates -- the programs selected by past candidates. The revealed preference approach in this paper combines both the selections of schools by candidates, as well as the scores of the candidates. The latter reflects the subtle impact of how high-scoring students expect to have a higher probability of being accepted {\em at any school} compared with low-scoring students. Our first approach generated rankings from the monotonicity of schools' test score distributions, which we call the $m$-measure, while the second was a skewed tournament where each matchup between schools was limited to candidates with higher scores selecting the other school versus candidates with a given score selecting a school. The test scores, again, served here as a subtle indication of the expectation of being accepted at a school. 

The rankings created by the revealed preference approach largely resemble the expert rankings in the top 50, with correlations of 0.92 and 0.94 reported for the USNWR rankings, for example. For the top 100 available for USNWR, the correlation is 0.72. The rankings are relatively immune to the choices of school administrators that can bias other rankings, given how they simply represent an adjusted version of a `wisdom-of-the-crowd'. 

There is great potential in giving students the option of tailored rankings for their subgroup, as we do in Tables \ref{tab:maj} and \ref{tab:tctzn}. This kind of tailored ranking is a valuable feature of our approach, and one that is rooted solidly in the economic rationale that certain groups of potential students value some schools more. The range of options for defining subgroups to use in ranking customization is broad and can include college major, geography, and citizenship, among others. 

Furthermore, the revealed preference rankings presented here have the added advantage of showing more than just the top 100 schools. Every school that received a useful number of score reports can be ranked. The expert-opinion sources rationally limit their rankings to the top, as their ranking process requires volumes of data. The revealed preference approach uses data that test administrators already have. For the same reason, this approach also removes the need for a specialized survey, like the one used in \cite{averyetalRPranking}.

\section{Proof of Theorem~\ref{thm:main}}\label{sec:proof}

Fix a utility function $v$ and a score $s\in \S$. Define $U$ from $v$ as in the description of our model.
Let $V:[M] \to \Re$ and $A:[M]\to 2^C$ satisfy 
\begin{align*}
    V(K) & = \max\{ U(A): A\subseteq C \text{ and } \abs{A}\leq K \} \\
    & = U(A(K)).
\end{align*} 

Write $p(c) = F(s-t_j)$ for $c=c_j$.

\begin{lemma}\label{lem:VandA} 
The functions $V$ and $A$ satisfy \begin{align*}
V(1) & = \max \{ p(c) v(c) : c\in C \}   \\
V(K) & = \max \{ p(c) v(c) + (1-p(c)) V(K-1): c\in C \},   K>1\\
A(K) & = \{ c^1,\ldots, c^K \} 
\end{align*} with \[ 
v(c^1)\leq v(c^2) \ldots \leq  v(c^K)
\]
\end{lemma}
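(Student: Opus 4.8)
\emph{Overall approach.} The plan is to prove the three assertions together by induction on $K$, which is natural since the statement is itself recursive. For $\ell\ge 0$ set $h_\ell(c):=p(c)v(c)+\bigl(1-p(c)\bigr)V(\ell)$, so the recursion to be established reads $V(K)=\max_{c\in C}h_{K-1}(c)$. I would also carry along, as part of the inductive hypothesis, the (equivalent) fact that the $v$-maximal school of an optimal $(K-1)$-portfolio maximizes $h_{K-2}$ and that deleting it leaves an optimal $(K-2)$-portfolio; both fall out of the peeling step below, so this costs nothing.

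\emph{Base case and the easy inequality.} For $K=1$ the empty portfolio is worth $0$ and each singleton $\{c\}$ is worth $p(c)v(c)\ge 0$, so $V(1)=\max_c p(c)v(c)$ with $A(1)$ a singleton. For the step ($K>1$): from the form of $U$ one has $U(A)\le\max_{c\in A}v(c)$ because the coefficients appearing in $U(A)$ are nonnegative and sum to at most one; in particular $V(\ell)\le\max_c v(c)$ for all $\ell$, so a $v$-maximal school $\bar c$ satisfies $h_{K-1}(\bar c)\ge V(K-1)$. Now if $\abs A<K$ then $U(A)\le V(K-1)\le h_{K-1}(\bar c)\le\max_c h_{K-1}(c)$; and if $\abs A=K$, enumerating $A=\{c^1,\dots,c^K\}$ with $v(c^1)\le\cdots\le v(c^K)$ and peeling off the top school gives $U(A)=p(c^K)v(c^K)+\bigl(1-p(c^K)\bigr)U(\{c^1,\dots,c^{K-1}\})\le h_{K-1}(c^K)\le\max_c h_{K-1}(c)$. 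Hence $V(K)\le\max_c h_{K-1}(c)$.

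\emph{The hard inequality.} Let $c^\ast\in\argmax_c h_{K-1}(c)$ and let $B^\ast$ be an optimal $(K-1)$-portfolio with $v$-maximal element $\hat c$. If $v(c^\ast)\ge v(\hat c)$, then $B^\ast\cup\{c^\ast\}$ --- appending a \emph{duplicate} copy of $c^\ast$ when $c^\ast\in B^\ast$, the one place the duplicate-colleges assumption is used --- is a $K$-portfolio with $c^\ast$ on top, hence worth $p(c^\ast)v(c^\ast)+\bigl(1-p(c^\ast)\bigr)V(K-1)=\max_c h_{K-1}(c)$; with the easy direction this gives $V(K)=\max_c h_{K-1}(c)$, and peeling $c^\ast$ off this optimal $K$-portfolio exhibits $A(K)$ in the claimed nested, $v$-ordered form. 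To neutralize the case $v(c^\ast)<v(\hat c)$, I would show $\hat c$ is \emph{also} a maximizer of $h_{K-1}$, which reduces to the previous case. By the strengthened hypothesis $\hat c$ maximizes $h_{K-2}$ with $h_{K-2}(\hat c)=V(K-1)$; writing $\Delta:=V(K-1)-V(K-2)\ge 0$ and using $h_{K-1}(c)=h_{K-2}(c)+\bigl(1-p(c)\bigr)\Delta$, the facts $h_{K-2}(c^\ast)\le V(K-1)$ and $h_{K-1}(c^\ast)\ge h_{K-1}(\hat c)$ combine to force $\Delta\bigl(p(\hat c)-p(c^\ast)\bigr)\ge 0$. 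If $\Delta=0$ then $h_{K-1}\equiv h_{K-2}$ and $\hat c$ maximizes it directly; if $\Delta>0$ then $p(\hat c)\ge p(c^\ast)$, while $h_{K-1}(c^\ast)\ge h_{K-1}(\hat c)\ge V(K-1)$ forces $v(c^\ast)\ge V(K-1)$ (the sub-case $p(c^\ast)=0$ instead yields $\max_c h_{K-1}=V(K-1)\le V(K)$ outright), so $\hat c$ dominates $c^\ast$ in both coordinates, and plugging into $h_{K-1}(c)=V(K-1)+p(c)\bigl(v(c)-V(K-1)\bigr)$ one reads off $h_{K-1}(\hat c)\ge h_{K-1}(c^\ast)$, making $\hat c$ a maximizer.

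\emph{Main obstacle.} The real work is this last step: the obvious candidate $c^\ast$ for the new $K$-th school need not be at least as good as every school already in an optimal $(K-1)$-portfolio, so one cannot simply glue it on top, and the small dominance/exchange argument that replaces $c^\ast$ by $\hat c$ is exactly what lets the recursion close and legitimizes the nested description of $A(K)$. Everything else --- the boundary cases $p(c)\in\{0,1\}$ and the bookkeeping around appending a duplicate college --- is routine, but I would spell it out rather than skip it.
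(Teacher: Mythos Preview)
Your argument is correct and complete, but it follows a genuinely different route from the paper's. The paper first restricts to \emph{undominated} colleges (those not beaten in both $v$ and $p$), so that among the relevant colleges $p$ is a strictly decreasing function of $v$; it then observes that $(w,V)\mapsto p(v^{-1}(w))w+(1-p(v^{-1}(w)))V$ has increasing differences and invokes Milgrom--Shannon monotone comparative statics to conclude $v(c^{K-1})\le v(c^K)$ once $V(K-2)\le V(K-1)$. Your proof instead does this step by hand: in the case $v(c^\ast)<v(\hat c)$ you manufacture the dominance relation $p(\hat c)\ge p(c^\ast)$ from the two optimality inequalities and the identity $h_{K-1}=h_{K-2}+(1-p)\Delta$, and then read off $h_{K-1}(\hat c)\ge h_{K-1}(c^\ast)$ directly. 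The paper's approach is shorter and makes the supermodular structure explicit (useful later in the paper for the comparative statics intuition in Section~\ref{sec:discussion}); yours is more elementary and self-contained, avoids the preliminary reduction to undominated colleges (dominance emerges inside the argument), handles the boundary $\Delta=0$ and $p(c^\ast)=0$ cases explicitly rather than implicitly, and pinpoints exactly where the duplicate-colleges assumption is needed. Either is fine for the lemma; the paper's choice pays off mainly because the increasing-differences viewpoint recurs in Lemma~\ref{lem:paloapalo}.
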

\begin{proof}
First, we may restrict attention to the subset $\C'$ of $c\in \C$ that satisfies the condition that there is no $c'\in\C$ with  $v(c)< v(c')$ and $p(c') \geq  p(c)$, because if there exists such $c'$ then $c$ is never chosen. So $\C'$ is the set of \df{undominated colleges}.


The proof is by induction.

The formula for $V(1)$ is obvious. Let $A(1) = \{c^1\}$. Suppose that $A(2) = \{c,c'\}$ and say wlog that $v(c)\leq v(c')$. This means that $V(2) = p(c') v(c') + (1-p(c')) p(c) v(c)$. Now observe that we must have $p(c) v(c)= \max\{p(\tilde c)v(\tilde c):\tilde c\in C\} = V(1)$ because if there is some $\tilde c'$ with $p(\tilde c)v(\tilde c)> p(c) v(c) $ then by substituting $\tilde c$ (or its duplicate) for $c$ in $A(1)$ would lead to a higher utility (the student would obtain more utility by planning to accept $\tilde c$ upon being rejected from $c'$). Thus $V(2) = p(c') v(c') + (1-p(c')) V(1)$ and we may as well assume that $c=c^1$ as $c$ must be a duplicate of $c^1$. Then, again making use of duplicates, we have that  $V(2) \geq  p(\tilde c) v(\tilde c) + (1-p(\tilde c)) V(1)$ for all $\tilde c\in \C$, and thus $V(2) =\max\{  p(\tilde c) v(\tilde c) + (1-p(\tilde c)) V(1) :\tilde c\in \C \}$. Since $A(2) = \{c,c'\}$ we can set $c^2=c'$, and we have $v(c^1)\leq v(c^2)$. 

Now let $K\geq 3$ and suppose that the lemma is true for $k\leq K-1$. Then it is easy to see that, as before,  
$V(K)  = \max \{ p(\tilde c) v(\tilde c) + (1-p(\tilde c) ) V(K-1): \tilde c\in C \}$. Let $c^K$ be such that 
$V(K)  = p(c^K) v(c^K) + (1-p(c^K)) V(K-1)$. We want to show that $v(c^{K-1})\leq v(c^K)$.

Observe that if $h$ is a (strictly) decreasing function then the function \[(x^1,x^2)\mapsto h(x^2) x^2 + (1-h(x^2)) x^1\] has (strictly) increasing differences. Among $c,c'\in \C'$, $v(c)<v(c')$ implies that $p(c)> p(c')$. So we may conclude that $p(v^{-1}(w))w+(1-p(v^{-1}(w))) V$ has increasing differences in $(w,V)$. Since $V(K-2) < V(K-1)$,  we conclude by monotone comparative statics (see \cite{milgrom1994monotone}) that $v(c^{K-1})\leq v(c^K)$.
\end{proof}

In light of Lemma~\ref{lem:VandA}, when  $C_s(v) = \{c^1,\ldots,c^K\}$ and $v(c^1)\leq v(c^2) \ldots \leq  v(c^K)$ we can think of a student applying to schools sequentially and then choosing where to go in the reverse order of application, in the event that they get admitted. If they would only apply to one school, it would be $c^1$. Then, if given the opportunity to add a school, they would apply to $c^2$; and so on.

\begin{lemma}\label{lem:paloapalo}
Fix a utility $v$ and write $K$ for $k^v_s$ and $K'$ for $k^v_{s'}$. 
Let $s<s'$, $C_s(v) = \{c^1,\ldots,c^K\}$, and $C_{s'}(v) = \{d^1,\ldots,d^{K'}\}$ with 
\[ 
v(c^1)\leq v(c^2) \ldots \leq  v(c^K) \text{ and } v(d^1)\leq v(d^2) \ldots \leq  v(d^{K'})
\]
and $K\leq K'$. Then $v(c^k)\leq v(d^k)$ for $1\leq k\leq K$.
\end{lemma}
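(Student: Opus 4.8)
\medskip

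\noindent\textbf{Proof plan.} The plan is to read each chosen school off the recursive ``greedy'' structure of Lemma~\ref{lem:VandA} and then apply monotone comparative statics. Write $V_s(k)$, $V_{s'}(k)$ for the optimal values of a budget‑$k$ portfolio under the two scores and $p_s(c)=F(s-t_c)$. By Lemma~\ref{lem:VandA}, $c^k$ is a maximizer of $c\mapsto p_s(c)v(c)+(1-p_s(c))V_s(k-1)$ and $d^k$ of $c\mapsto p_{s'}(c)v(c)+(1-p_{s'}(c))V_{s'}(k-1)$. I would first pass to the set $\C'$ of \emph{undominated} colleges: this loses nothing, since a dominated college is never chosen, and crucially $\C'$ is the \emph{same} for every score, because $p_s(c')\ge p_s(c)$ holds at some (equivalently, every) $s$ exactly when $t_{c'}\le t_c$. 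On $\C'$ a higher utility $v$ is equivalent to a higher threshold $t$, hence to a lower admission probability $p_s$ at every score, so I can reparametrize the choice by the utility level $w=v(c)$ and study $\psi(w,s,V):=p_s(w)(w-V)$; the greedy step maximizes $\psi(\cdot,s,V_s(k-1))$ up to the additive constant $V_s(k-1)$, and its maximizer always lies in the region $w\ge V_s(k-1)$ because $V_s(k)\ge V_s(k-1)$.

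The argument then rests on two monotonicity properties of $w^\star(s,V):=\max\,\argmax_w\psi(w,s,V)$, restricting $w$ to that relevant region. \emph{(i) Increasing differences in $(w,V)$ for fixed $s$:} for $w_1<w_2$, $\psi(w_2,s,V)-\psi(w_1,s,V)=[p_s(w_2)w_2-p_s(w_1)w_1]+V[p_s(w_1)-p_s(w_2)]$, and $p_s(w_1)-p_s(w_2)\ge 0$, so the difference is nondecreasing in $V$; Topkis's theorem gives that $w^\star(s,\cdot)$ is nondecreasing. \emph{(ii) Single crossing in $(w,s)$ for fixed $V$:} the case that needs an argument is $V<w_1<w_2$, where $\psi(w_2,s,V)\ge\psi(w_1,s,V)$ iff $p_s(w_2)/p_s(w_1)\ge (w_1-V)/(w_2-V)$; the left-hand side is $F(s-t_2)/F(s-t_1)$ with $t_1<t_2$, and it is nondecreasing in $s$ precisely because the reverse hazard rate $f/F$ is nonincreasing, i.e.\ because $F$ is log-concave --- which follows from the hypothesis that $F$ is concave, $\log$ being concave and increasing. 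Milgrom--Shannon then gives that $w^\star(\cdot,V)$ is nondecreasing \citep{milgrom1994monotone}. Hence $w^\star$ is nondecreasing in each coordinate.

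To finish, observe that $V_{s'}(k)\ge V_s(k)$ for every $k$ with no induction required: couple the shocks $\{\ep_{i,j}\}$ across the two scores; raising $s$ to $s'$ can only enlarge the set of colleges admitting a given student, so for any fixed portfolio the realized payoff (the value of the best admitted college, or the outside option) is pointwise weakly larger under $s'$, and taking expectations and then the maximum over portfolios of size $\le k$ gives the claim. Therefore $v(c^k)=w^\star(s,V_s(k-1))\le w^\star(s,V_{s'}(k-1))\le w^\star(s',V_{s'}(k-1))=v(d^k)$, the first inequality being property (i) with $V_s(k-1)\le V_{s'}(k-1)$ and the second property (ii) with $s\le s'$; the base case $k=1$ is the same with $V_s(0)=V_{s'}(0)=0$.

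The step I expect to be the crux is property (ii): that raising the test score --- which raises \emph{every} admission probability --- nonetheless tilts the greedy choice toward the \emph{more} selective schools. This is exactly where concavity of $F$ enters, through the implied log-concavity (a monotone reverse hazard rate); without it the comparative static can reverse. A secondary, bookkeeping matter is ties in the greedy step: to pass from the strong‑set‑order conclusions of Topkis and Milgrom--Shannon to the pointwise inequality $v(c^k)\le v(d^k)$ I would break ties toward the higher‑$v$ maximizer (equivalently, work with the largest optimal portfolio at each budget), which is immaterial in generic instances and can otherwise be absorbed by a vanishing perturbation of $F$ or of the utility values.
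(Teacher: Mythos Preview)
Your proposal is correct and follows the same architecture as the paper's proof: read $c^k$ and $d^k$ off the recursive step of Lemma~\ref{lem:VandA}, then run monotone comparative statics in the pair $(s,V(k-1))$, having first noted $V_{s'}(k-1)\ge V_s(k-1)$. The only substantive difference is how the comparative static in $s$ is obtained. The paper works directly with the payoff difference $F(s-t_k)(v(c^k)-V)-F(s-\tilde t_j)(v(\tilde c_j)-V)$ for an undominated competitor $\tilde c_j$ with $v(\tilde c_j)<v(c^k)$ (hence $\tilde t_j<t_k$), and shows it is increasing in $s$ using concavity of $F$ through the fact that $f$ is nonincreasing --- an increasing-differences argument. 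You instead establish single crossing via the ratio $F(s-t_2)/F(s-t_1)$, whose monotonicity in $s$ is log-concavity of $F$, implied by concavity. Both routes deliver the needed Topkis/Milgrom--Shannon conclusion; the paper's is marginally more elementary (no hazard-rate language), while yours would go through under the weaker hypothesis that $F$ is merely log-concave. Your coupling argument for $V_{s'}(k)\ge V_s(k)$ is an explicit justification of something the paper simply asserts. On ties: the paper sidesteps your $\max\argmax$ device by taking any $\tilde c_j$ with $v(\tilde c_j)<v(c^k)$ strictly and arguing it is strictly suboptimal at position $k$ under $(s',V')$, which yields $v(d^k)\ge v(c^k)$ regardless of which optimal portfolio was selected.
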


\begin{proof} Fix $v$ and $k\in [K]$. Let $c^k\in C_s(v)$, using the notation from Lemma~\ref{lem:VandA}. So $c^k$ is chosen for the $k$th position by a student with utility $v$ and score $s$. From Lemma~\ref{lem:VandA} we know that $c^k$ solves the problem of maximizing the function
\[ \tilde{c}_j\mapsto 
F(s- \tilde{t}_j) v(\tilde{c}_j) + (1-F(s- \tilde{t}_j)) V = F(s- \tilde{t}_j) (v(\tilde{c}_j) - V) + V,
\] where $V= V(k-1)$.
Denote by $V'$ the value $V(k-1)$ in Lemma~\ref{lem:VandA} for score $s'$ (define $V(0)=0$). Since $s < s'$ we have $V\leq V'$.

Now let $\tilde{c}_j\in \C$ be arbitrary with  $v(c_k) > v(\tilde{c}_j)$ and threshold $\tilde{t}_j$. We want to prove that $\tilde{c}_j$ cannot be chosen in $C_{s'}(v)$ for the $k$th position, which implies that $v(c^k)\leq v(d^k)$. If $\tilde{t}_j\geq t_k$ then $\tilde{c}_j$ is dominated: it will never be chosen. In particular, $\tilde{c}_j\notin C_{s'}(v_i)$. So suppose that $\tilde{t}_j<t_k$. Note that we can also rule out that $v(\tilde{c}_j) < V$ as $V\leq V'$ and $V'\leq v(d^k)$ (Lemma~\ref{lem:VandA}). So we know that $v(\tilde{c}_j)\geq V$. Then we obtain
\[
\begin{split}
F(s- t_k) v(c_k) + (1-F(s- t_k)) V 
& - [F(s-\tilde{t}_j) v(\tilde{c}_j) + (1-F(s- \tilde{t}_j)) V ] \\
& = F(s-t_k) [v(c_k)-V] - F(s- \tilde{t}_j) [v(\tilde{c}_j)-V] \\
& = F(s- t_k) v(c_k) - F(s- \tilde{t}_j) v(\tilde{c}_j) \\ & - (F(s- t_k)  - F(s- \tilde{t}_j)) V,    \\
\end{split}
\] two different representations for the difference in expected utility from applying to $c_k$ over applying to $\tilde{c}_j$ for the $k$th choice. We shall show that the first representation implies that the difference is increasing in $s$, and the second that it is increasing in $V$. 

Note that $s\mapsto F(s-t_k) - F(s-\tilde{t}_j)$ is strictly monotone increasing, as $F$ is concave and $t_k> \tilde{t}_j$. Since $[v(c_k)-V] > [v(\tilde{c}_j)-V]\geq 0$, $s\mapsto  F(s- t_k) [v(c_k)-V] - F(s- \tilde{t}_j) [v(\tilde{c}_j)-V] $ is also strictly monotone increasing. Note also that 
$- (F(s- t_k)  - F(s- \tilde{t}_j)) >  0$ so that $V\mapsto - (F(s- t_k)  - F(s- \tilde{t}_j)) V$ is strictly monotone increasing. 

The last step is a revealed preference argument. Given that $c_k\in C_s(v)$ is chosen at position $k$, we have:
\begin{align*}
    0 & \leq  F(s- t_k) v(c_k) + (1-F(s- t_k)) V 
 - [F(s- \tilde{t}_j) v(\tilde{c}_j) + (1-F(s- \tilde{t}_j)) V ] \\
 & <  F(s'- t_k) v(c_k) + (1-F(s'- t_k)) V 
 - [F(s'- \tilde{t}_j) v(\tilde{c}_j) + (1-F(s'- \tilde{t}_j)) V ] \\
  & \leq  F(s'- t_k) v(c_k) + (1-F(s'- t_k)) V' 
 - [F(s'- \tilde{t}_j) v(\tilde{c}_j) + (1-F(s'- \tilde{t}_j)) V' ].
\end{align*} The second inequality follows because the difference in expected utilities is strictly increasing in $s$ and $s<s'$. The third inequality because it is strictly increasing in $V$, and $V\leq V'$. 

Thus, if we have some $\tilde{c}_j$ for which $v(c_k) > v(\tilde{c}_j)$ and $\tilde{c}_j\in C_{s'}(v)$, then it is not chosen for the $k$th choice.
\end{proof}

Now we are in a position to prove Theorem~\ref{thm:main}.

\begin{proof} First we show that if  $s,s'\in S$ with $s< s'$ then for each $v$ there is a one-to-one function $f_v:C_s(v)\to C_{s'}(v)$ such that $c\leq f_v(c)$ for all $c\in C_s(v)$.

By Lemma~\ref{lem:paloapalo}, if $c\in C_s(v)$ and $s' > s$ then there exists $c' \in C_{s'}(v) $ such that $v(c)\leq v(c')$. Moreover $c'$ is chosen at the same position as $c$, since there are at least as many positions in $C_{s'}(v)$ as there are in $C_{s}(v)$ ($k^v_s\leq k^v_{s'}$). Observe that we cannot have $c'< c$ because such $c'$ would dominate $c$ (provide a higher utility at a higher probability of acceptance) and contradict that $c\in C_s(v)$.  Let $c'=f_v(c)$. Since different $c$ in $C_s(v)$ are chosen at different positions, $f_v$ is one-to-one. 

Now let $s< s'$ and note that $\bar G_{s'}(c) - \bar G_s(c)$ equals
\begin{align*} 
\sum_{c'\geq c} \int_{\Re^{\C}}  \left[\one_{c'\in C_{s'}(v)}  - \one_{c'\in C_s(v)} \right] d \la(v) 
& = \int_{\Re^{\C}}
\left[ \abs{\{c'\in C_{s'}(v) : c'\geq c \}}\right. \\ 
& \left. - \abs{\{c'\in C_{s}(v) : c'\geq c \}} \right] d \la(v) \\
& = \int_{\Re^{\C}} \abs{\{c'\in C_{s'}(v) \setminus f_{v} (C_{s}(v)) : c'\geq c \}} d \la(v) \\
& \geq 0.
\end{align*}
\end{proof}

\begin{remark}\label{rmk:strictcomparison}
Moreover, if $k^v_{s'}>k^v_s$ then the last inequality in the proof is strict as the $\{c'\in C_{s'}(v) \setminus f_{v} (C_{s}(v)) : c'\geq c \}\neq \os$ (by Lemma~\ref{lem:paloapalo}). In consequence, if $s\mapsto k_s$ is strictly increasing, then $\bar G_{s'}(c)> \bar G_s(c)$ when $s'>s$.
\end{remark}

\clearpage
\bibliographystyle{te}
\bibliography{colleges}

\begin{thebibliography}{17}
\newcommand{\enquote}[1]{``#1''}
\providecommand{\natexlab}[1]{#1}
\providecommand{\url}[1]{\texttt{#1}}
\providecommand{\urlprefix}{URL }
\providecommand{\bibAnnoteFile}[1]{%
  \IfFileExists{#1}{\begin{quotation}\noindent\textsc{Key:} #1\\
  \textsc{Annotation:}\ \input{#1}\end{quotation}}{}}
\providecommand{\bibAnnote}[2]{%
  \begin{quotation}\noindent\textsc{Key:} #1\\
  \textsc{Annotation:}\ #2\end{quotation}}

\bibitem[{Ali and Shorrer(2025)}]{alishorrer24}
Ali, S~Nageeb and Ran~I Shorrer (2025), \enquote{{Hedging When Applying:
  Simultaneous Search with Correlation}.} \emph{American Economic Review}, 115,
  571--598.
\bibAnnoteFile{alishorrer24}

\bibitem[{Angrist et~al.(2023)Angrist, Hull, and Walters}]{angrist2023methods}
Angrist, Joshua, Peter Hull, and Christopher Walters (2023), \enquote{{Methods
  for Measuring School Effectiveness}.} \emph{Handbook of the Economics of
  Education}, 7, 1--60.
\bibAnnoteFile{angrist2023methods}

\bibitem[{Avery et~al.(2012)Avery, Glickman, Hoxby, and
  Metrick}]{averyetalRPranking}
Avery, Christopher~N., Mark~E. Glickman, Caroline~M. Hoxby, and Andrew Metrick
  (2012), \enquote{{A Revealed Preference Ranking of U.S. Colleges and
  Universities}.} \emph{Quarterly Journal of Economics}, 128, 425--467.
\bibAnnoteFile{averyetalRPranking}

\bibitem[{Chade et~al.(2014)Chade, Lewis, and Smith}]{chade2014student}
Chade, Hector, Gregory Lewis, and Lones Smith (2014), \enquote{{Student
  Portfolios and the College Admissions Problem}.} \emph{Review of Economic
  Studies}, 81, 971--1002.
\bibAnnoteFile{chade2014student}

\bibitem[{Chade and Smith(2006)}]{chade2006simultaneous}
Chade, Hector and Lones Smith (2006), \enquote{{Simultaneous Search}.}
  \emph{Econometrica}, 74, 1293--1307.
\bibAnnoteFile{chade2006simultaneous}

\bibitem[{Dearden et~al.(2019)Dearden, Grewal, and
  Lilien}]{dearden2019strategic}
Dearden, James~A, Rajdeep Grewal, and Gary~L Lilien (2019), \enquote{{Strategic
  Manipulation of University Rankings, the Prestige Effect, and Student
  University Choice}.} \emph{Journal of Marketing Research}, 56, 691--707.
\bibAnnoteFile{dearden2019strategic}

\bibitem[{Ehrenberg(2003)}]{ehrenberg2003method}
Ehrenberg, Ronald~G (2003), \emph{Method or Madness? Inside the "USNWR" College
  Rankings.} ERIC.
\bibAnnoteFile{ehrenberg2003method}

\bibitem[{Fu(2014)}]{fu2014equilibrium}
Fu, Chao (2014), \enquote{{Equilibrium Tuition, Applications, Admissions, and
  Enrollment in the College Market}.} \emph{Journal of Political Economy}, 122,
  225--281.
\bibAnnoteFile{fu2014equilibrium}

\bibitem[{Goodman et~al.(2020)Goodman, Gurantz, and Smith}]{goodman2020take}
Goodman, Joshua, Oded Gurantz, and Jonathan Smith (2020), \enquote{{Take Two!
  SAT Retaking and College Enrollment Gaps}.} \emph{American Economic Journal:
  Economic Policy}, 12, 115--158.
\bibAnnoteFile{goodman2020take}

\bibitem[{Goodman(2016)}]{goodman2016learning}
Goodman, Sarena (2016), \enquote{{Learning from the Test: Raising Selective
  College Enrollment by Providing Information}.} \emph{Review of Economics and
  Statistics}, 98, 671--684.
\bibAnnoteFile{goodman2016learning}

\bibitem[{Hartocollis(2022)}]{columbia}
Hartocollis, Anemona (2022), \enquote{{U.S. News Ranked Columbia No. 2, but a
  Math Professor Has His Doubts}.} \emph{The New York Times}, 1,
  \urlprefix\url{https://www.nytimes.com/2022/03/17/us/columbia-university-rank.html}.
\bibAnnoteFile{columbia}

\bibitem[{MacLeod et~al.(2017)MacLeod, Riehl, Saavedra, and
  Urquiola}]{macleod2017big}
MacLeod, W~Bentley, Evan Riehl, Juan~E Saavedra, and Miguel Urquiola (2017),
  \enquote{{The Big Sort: College Reputation and Labor Market Outcomes}.}
  \emph{American Economic Journal: Applied Economics}, 9, 223--61.
\bibAnnoteFile{macleod2017big}

\bibitem[{MacLeod and Urquiola(2015)}]{macleod2015reputation}
MacLeod, W~Bentley and Miguel Urquiola (2015), \enquote{{Reputation and School
  Competition}.} \emph{American Economic Review}, 105, 3471--88.
\bibAnnoteFile{macleod2015reputation}

\bibitem[{Milgrom and Shannon(1994)}]{milgrom1994monotone}
Milgrom, Paul and Chris Shannon (1994), \enquote{{Monotone Comparative
  Statics}.} \emph{Econometrica}, 157--180.
\bibAnnoteFile{milgrom1994monotone}

\bibitem[{Mountjoy and Hickman(2021)}]{mountjoy2021returns}
Mountjoy, Jack and Brent Hickman (2021), \enquote{{The Returns to College(s):
  Relative Value-Added and Match Effects in Higher Education}.} \emph{NBER
  Working Paper}, 2020--08.
\bibAnnoteFile{mountjoy2021returns}

\bibitem[{Tough(2023)}]{tough2023americans}
Tough, Paul (2023), \enquote{{Americans Are Losing Faith in the Value of
  College. Whose Fault Is That?}} \emph{The New York Times}, 5,
  \urlprefix\url{https://www.nytimes.com/2023/09/05/magazine/college-worth-price.html}.
\bibAnnoteFile{tough2023americans}

\bibitem[{{United States Attorney's Office}(2021)}]{temple}
{United States Attorney's Office} (2021), \enquote{{Former Temple Business
  School Dean Indicted for Fraud}.}
  \urlprefix\url{https://www.justice.gov/usao-edpa/pr/former-temple-business-school-dean-indicted-fraud}.
\bibAnnoteFile{temple}

\end{thebibliography}

\clearpage
\appendix
\section{Appendices}

\ifnum\figmode=1
\subsection{Figures and Tables}

\begin{figure}[h]\caption{Grouped Test Score Distribution}
\label{fig:scoredist2}
\begin{center}
\includegraphics[width=0.95\columnwidth]{score_dist_2.pdf}
\includegraphics[width=0.95\columnwidth]{score_dist_3.pdf}
\end{center}
\begin{minipage}[h]{.97\linewidth}\footnotesize{}{Note: The x-axis shows the range of GMAT Total Scores. The y-axis in the first panel shows the percentage of candidates in each group that received the test score on the x-axis. The first group represents all FTMBA program candidates, the second group is the subset of candidates that selected one of three low-ranked programs, the last group is the subset of candidates that selected Harvard, as one example of a top school. The second panel approximates a conditional probability distribution, showing the share of candidates with a given test score, that selected into one of the two sub-groups in panel 1.}\\
\footnotesize{}{Data Source: GMAC}
\end{minipage}
\end{figure}

\begin{landscape}
\begin{figure}[htbp]\caption{Heatmap of Candidate Total Scores and Program Average Scores}
\label{fig:heatmap}
\begin{center}
\includegraphics[width=0.9\columnwidth]{prog_cands_heatmap.pdf}
\end{center}
\begin{minipage}[h]{.97\linewidth}
\footnotesize{}{Note: The graph shows the fraction of candidates with each test score that selected school program(s) in a score category. The score categories are defined by the average of GMAT Scores sent to the school programs. We set the number of categories to match the number of possible GMAT scores. Programs with fewer than 900 score reports are excluded.}\\
\footnotesize{}{Data Source: GMAC}
\end{minipage}
\end{figure}
\end{landscape}

\begin{figure}[h]
\begin{tikzpicture}
    \begin{axis}[
        axis lines=middle, 
        xmin=-0.05, xmax=3.5, 
        ymin=-0.05, ymax=3.5,  
        samples=100,
        domain=0:3,
        xlabel={\footnotesize $p(c)$},xlabel style={below},
        ylabel={\footnotesize $v(c)$},ylabel style={left},
        clip=false,
        xtick=\empty, ytick=\empty  
    ]
        \addplot[blue, thick, domain=0:sqrt(5)] {max(0, 2 - 0.4*x^2)}; 
        
        \pgfmathsetmacro\xone{0.6}
        \pgfmathsetmacro\xtwo{1.6}
        \pgfmathsetmacro\fone{2 - 0.4*\xone^2}
        \pgfmathsetmacro\ftwo{2 - 0.4*\xtwo^2}
        \pgfmathsetmacro\slopeone{-0.8*\xone}
        \pgfmathsetmacro\slopetwo{-0.8*\xtwo}
        
        \addplot[red, dashed, domain=0:1.4] {max(0, \fone + \slopeone*(x-\xone) + 0.2*(x-\xone)^2)} node[above right] {\footnotesize $p(c)v(c) + (1-p(c)) V'$};
        \addplot[violet, dashed, domain=1.0:2.4] {max(0, \ftwo + \slopetwo*(x-\xtwo) + 0.2*(x-\xtwo)^2)} node[above right] {\footnotesize $p(c)v(c) + (1-p(c)) V$};
        
        \addplot[black, only marks, mark=*] coordinates {(\xone,\fone) (\xtwo,\ftwo)};
    \end{axis}
\end{tikzpicture}
\caption{Choosing the $k$th school.}\label{fig:theory}
\end{figure}


\begin{table}[htbp] 
\begin{center}
  \caption{$M$ and $M+$ Measures} 
  \label{tab:t2} 
\resizebox{0.99\textwidth}{!}{
\begin{tabular}{@{\extracolsep{5pt}} clrr} 
\\[-1.8ex]\hline 
\hline \\[-1.8ex] 
 & University &  $m$ value & $m+$ value \\ 
\hline \\[-1.8ex] 
1 & Harvard University & $60,840.42$ & $3,612$ \\
2 & Stanford University & $51,376.91$ & $3,588$ \\
3 & University of Pennsylvania & $41,251.79$ & $3,560$ \\
4 & Columbia University & $31,595.05$ & $3,536$ \\
5 & Massachusetts Institute of Technology (MIT) & $29,497.54$ & $3,498$ \\
6 & Northwestern University & $27,297.17$ & $3,552$ \\
7 & University of Chicago & $23,775.61$ & $3,502$ \\
8 & University of California - Berkeley & $19,813.34$ & $3,492$ \\
9 & New York University & $17,603.50$ & $3,340$ \\
10 & Duke University & $11,388.95$ & $3,352$ \\
11 & Dartmouth College & $10,736.16$ & $3,518$ \\
12 & University of California - Los Angeles & $9,972.73$ & $3,122$ \\
13 & Yale University & $9,498.43$ & $3,258$ \\
14 & University of Virginia & $8,207.52$ & $3,190$ \\
15 & University of Michigan - Ann Arbor & $8,098.09$ & $3,332$ \\
16 & University of Texas - Austin & $5,615.18$ & $2,876$ \\
17 & Cornell University & $4,467.38$ & $3,114$ \\
18 & Georgetown University & $4,204.24$ & $2,948$ \\
19 & University of Southern California & $3,638.97$ & $2,708$ \\
20 & University of North Carolina - Chapel Hill & $3,523.65$ & $2,802$ \\
21 & Carnegie Mellon University & $2,870.80$ & $3,108$ \\
22 & Northwestern University$^*$ & $2,213.34$ & $3,020$ \\
23 & University of Notre Dame & $1,729.40$ & $2,702$ \\
24 & University of Washington - Seattle & $1,555.86$ & $2,614$ \\
25 & Vanderbilt University & $1,145.81$ & $2,432$ \\
26 & Boston College & $1,077.92$ & $2,468$ \\
27 & University of Minnesota - Duluth & $1,069.98$ & $1,406$ \\
28 & Indiana University - Bloomington & $992.51$ & $2,452$ \\
29 & Boston University & $984.37$ & $2,420$ \\
30 & Emory University & $855.64$ & $2,426$ \\
31 & Brigham Young University & $852.71$ & $2,334$ \\
32 & Washington University in St. Louis & $723.16$ & $2,676$ \\
33 & University of Minnesota - Twin Cities & $657.06$ & $2,392$ \\
34 & University of Wisconsin - Madison & $584.12$ & $2,366$ \\
35 & Babson College & $569.23$ & $2,348$ \\
\hline
\end{tabular} 
}   
\end{center} 
\raggedright{\footnotesize{}{The table shows our $m$-measure, based on the fraction of students with each test score that selected a school as part of the GMAT exam, $m(c) = \sum_{s,s'\in S}(g_{s'}(c) - g_s(c))(s'-s)$. We show the composite values of $m$ as well as the $mp(c) = \sum_{s,s'\in S}\left[(g_{s'}(c) - g_s(c))(s'-s)\right]>0$ for the top 35 programs. The table is ordered by the $m$ measure. $^*$ Programs within each school are ranked separately, this line shows the one-year accelerated FTMBA program for Northwestern University, not the regular 2-year FTMBA program on line 6.}}
\end{table}

\begin{table}[htbp] 
\begin{center}
      \caption{School Programs Tournament Rank}
  \label{tab:t5}
\resizebox{0.85\textwidth}{!}{
\begin{tabular}{@{\extracolsep{5pt}} clr}
\\[-1.8ex]\hline
\hline \\[-1.8ex]
Rank & University & wins \\
\hline \\[-1.8ex]
1 & Harvard University & $687$ \\
2 & Massachusetts Institute of Technology (MIT) & $686$ \\
3 & University of Pennsylvania & $685$ \\
4 & Stanford University & $684$ \\
5 & University of Chicago & $683$ \\
6 & Northwestern University & $682$ \\
7 & Dartmouth College & $681$ \\
8 & Columbia University & $680$ \\
9 & University of California - Berkeley & $679$ \\
10 & Yale University & $678$ \\
11 & Duke University & $677$ \\
12 & University of Virginia & $676$ \\
13 & University of Michigan - Ann Arbor & $675$ \\
14 & Cornell University & $674$ \\
15 & Carnegie Mellon University & $673$ \\
16 & Cornell University$*$ & $672$ \\
17 & New York University & $671$ \\
18 & Northwestern University$*$ & $670$ \\
19 & University of California - Los Angeles & $669$ \\
20 & Georgetown University & $668$ \\
21 & University of Texas - Austin & $667$ \\
22 & University of North Carolina - Chapel Hill & $666$ \\
23 & University of Southern California & $664$ \\
24 & Washington University in St. Louis & $663$ \\
25 & University of Notre Dame & $663$ \\
26 & Brandeis University & $662$ \\
27 & Emory University & $660$ \\
28 & University of Washington - Seattle & $659$ \\
29 & University of Michigan - Ann Arbor$**$ & $659$ \\
30 & Boston College & $657$ \\
31 & Indiana University - Bloomington & $656$ \\
32 & Boston University & $655$ \\
33 & Duke University$**$ & $655$ \\
34 & Vanderbilt University & $654$ \\
35 & University of Wisconsin - Madison & $653$ \\
\hline \\[-1.8ex]
\end{tabular}
}
\end{center}
\raggedright{\footnotesize{}{The table shows the top 35 FTMBA programs from a tournament ranking approach. All 688 MBA programs are matched against one another, with programs notching a point for each candidate that selected the program but did not select the other one, when at least one person with a higher score selected the other program. Having more points in each of the 687 match-ups counts as a win for a school program. Section \ref{sec:i2} explains the details of the match-ups.
$^*$ shows one-year accelerated FTMBA programs, (not the regular 2-year FTMBA), while $^{**}$ shows the Cross-Continent/Global MBA program for the University.}}
\end{table}

\begin{table}[htbp]
    \begin{center}
    \caption{Rankings by Undergrad Major Categories}
\resizebox{0.99\textwidth}{!}{
\begin{tabular}{@{\extracolsep{5pt}} clrrc}
\\[-1.8ex]\hline
\hline \\[-1.8ex]
 & University & Business-Major & Other & Rank 2 \\
\hline \\[-1.8ex]
1 & Harvard University & $54,895.85$ & $60,355.61$ & $1$ \\
2 & Stanford University & $51,829.62$ & $50,819.01$ & $2$ \\
3 & University of Pennsylvania & $41,695.77$ & $39,719.20$ & $3$ \\
4 & Columbia University & $30,825.70$ & $30,616.29$ & $5$ \\
5 & Massachusetts Institute of Technology (MIT) & $27,520.64$ & $32,092.05$ & $4$ \\
6 & Northwestern University & $26,933.38$ & $26,251.89$ & $6$ \\
7 & University of Chicago & $26,280.87$ & $21,219.54$ & $7$ \\
8 & New York University & $23,097.43$ & $16,029.62$ & $9$ \\
9 & University of California - Berkeley & $22,923.71$ & $20,524.71$ & $8$ \\
10 & Duke University & $10,441.33$ & $11,335.77$ & $10$ \\
11 & Dartmouth College & $10,259.33$ & $10,478.46$ & $11$ \\
12 & University of California - Los Angeles & $10,029.10$ & $9,312.07$ & $13$ \\
13 & University of Virginia & $9,498.72$ & $7,232.24$ & $15$ \\
14 & Northwestern University$^*$ & $8,789.46$ & $781.92$ & $28$ \\
15 & Yale University & $8,640.99$ & $9,791.30$ & $12$ \\
16 & University of Michigan - Ann Arbor & $6,556.73$ & $8,377.74$ & $14$ \\
17 & University of Texas - Austin & $5,630.72$ & $5,543.51$ & $16$ \\
18 & Georgetown University & $4,714.83$ & $3,394.81$ & $18$ \\
19 & University of North Carolina - Chapel Hill & $4,153.64$ & $2,725.33$ & $21$ \\
20 & Cornell University & $4,144.41$ & $4,675.49$ & $17$ \\
21 & University of Southern California & $3,974.11$ & $2,975.03$ & $20$ \\
22 & University of Notre Dame & $2,189.31$ & $1,310.22$ & $24$ \\
23 & Carnegie Mellon University & $2,018.70$ & $3,387.08$ & $19$ \\
24 & Vanderbilt University & $1,609.70$ & $638.18$ & $30$ \\
25 & University of Washington - Seattle & $1,382.98$ & $1,455.37$ & $22$ \\
26 & Indiana University - Bloomington & $1,238.08$ & $605.37$ & $31$ \\
27 & Brigham Young University & $1,236.05$ & $323.79$ & $36$ \\
28 & Boston College & $1,079.23$ & $953.53$ & $26$ \\
29 & Emory University & $1,076.24$ & $546.23$ & $33$ \\
30 & University of Wisconsin - Madison & $806.91$ & $328.44$ & $35$ \\
31 & University of Minnesota - Twin Cities & $741.67$ & $518.69$ & $34$ \\
32 & Washington University in St. Louis & $726.56$ & $729.35$ & $29$ \\
33 & Boston University & $553.75$ & $1,266.39$ & $25$ \\
34 & University of Rochester & $506.49$ & $230.83$ & $37$ \\
35 & Southern Methodist University & $381.45$ & $-229.04$ & $275$ \\
\hline \\[-1.8ex]
\end{tabular}
}
    \label{tab:maj}
    \end{center}
\raggedright{\footnotesize{}{
The table shows our $m$ measure, based on the schools students selected, calculated separately for broad groups of undergraduate majors. Business majors, the first column,  includes economics, and all business undergraduate degrees, while the `Other` column holds all other undergraduate majors. We show the composite values for the programs ranked as top 35. The table is ordered by the values of the Business majors group. The last column shows the rank for the category holding other undergrad majors. $^*$ This line refers to the one-year accelerated FTMBA program for the school. See Table \ref{tab:t2} notes.}}
\end{table}

\begin{table}[htbp]
    \begin{center}
    \caption{Rankings: Citizens and Internationals}
\resizebox{0.99\textwidth}{!}{
\begin{tabular}{@{\extracolsep{5pt}} clrrc}
\\[-1.8ex]\hline
\hline \\[-1.8ex]
 & University & US Citizens & Noncitizens & Rank 2 \\
\hline \\[-1.8ex]
1 & Harvard University & $60,580.74$ & $54,401.89$ & $1$ \\
2 & Stanford University & $51,069.98$ & $46,000.91$ & $3$ \\
3 & University of Pennsylvania & $39,431.28$ & $46,980.27$ & $2$ \\
4 & Columbia University & $29,992.23$ & $35,849.85$ & $4$ \\
5 & Massachusetts Institute of Technology (MIT) & $27,995.47$ & $32,292.72$ & $5$ \\
6 & Northwestern University & $27,360.08$ & $24,555.17$ & $7$ \\
7 & University of Chicago & $22,620.66$ & $27,787.66$ & $6$ \\
8 & University of California - Berkeley & $19,559.78$ & $19,939.65$ & $8$ \\
9 & New York University & $16,586.46$ & $18,910.66$ & $9$ \\
10 & Duke University & $11,409.93$ & $9,806.70$ & $11$ \\
11 & Dartmouth College & $11,084.70$ & $8,125.19$ & $13$ \\
12 & University of California - Los Angeles & $10,326.46$ & $8,480.00$ & $12$ \\
13 & Yale University & $9,323.67$ & $10,488.48$ & $10$ \\
14 & University of Virginia & $9,036.48$ & $5,036.51$ & $17$ \\
15 & University of Michigan - Ann Arbor & $7,900.16$ & $7,396.70$ & $14$ \\
16 & University of Texas - Austin & $5,992.27$ & $3,781.67$ & $18$ \\
17 & Georgetown University & $4,866.80$ & $1,687.01$ & $23$ \\
18 & Cornell University & $4,046.25$ & $6,530.12$ & $15$ \\
19 & University of Southern California & $3,823.35$ & $2,603.73$ & $20$ \\
20 & University of North Carolina - Chapel Hill & $3,428.01$ & $2,787.93$ & $19$ \\
21 & Carnegie Mellon University & $2,286.97$ & $5,670.38$ & $16$ \\
22 & University of Notre Dame & $1,980.46$ & $679.57$ & $30$ \\
23 & Northwestern University$^*$ & $1,932.38$ & $2,191.48$ & $21$ \\
24 & University of Washington - Seattle & $1,859.85$ & $686.09$ & $29$ \\
25 & Vanderbilt University & $1,366.04$ & $42.27$ & $51$ \\
26 & Boston University & $1,265.37$ & $477.63$ & $32$ \\
27 & Brigham Young University & $1,260.17$ & $-916.38$ & $555$ \\
28 & Boston College & $1,137.64$ & $862.89$ & $27$ \\
29 & University of Minnesota - Duluth & $1,113.18$ & $-136.19$ & $252$ \\
30 & Indiana University - Bloomington & $973.45$ & $1,300.68$ & $25$ \\
31 & Washington University in St. Louis & $877.13$ & $384.08$ & $35$ \\
32 & University of Minnesota - Twin Cities & $851.28$ & $164.90$ & $42$ \\
33 & University of Wisconsin - Madison & $700.24$ & $105.37$ & $46$ \\
34 & University of Colorado Boulder & $664.94$ & $-727.57$ & $526$ \\
35 & Babson College & $651.89$ & $330.02$ & $38$ \\

\hline \\[-1.8ex]
\end{tabular}
}
    \label{tab:tctzn}
    \end{center}
\raggedright{\footnotesize{}{
The table shows our $m$ measure calculated separately for US citizens and international students. The table is ordered by the values for citizens. The last column shows the rank for international students. $^*$ This line refers to the one-year accelerated FTMBA program for the school. See Table \ref{tab:t2} notes.}}
\end{table}

\begin{table}[htbp]
    \begin{center}
    \caption{Rankings: Early and Later Periods}
\resizebox{0.99\textwidth}{!}{
\begin{tabular}{@{\extracolsep{5pt}} clrrc}
\\[-1.8ex]\hline
\hline \\[-1.8ex]
 & University & Later & Early & Rank 2 \\
\hline \\[-1.8ex]
1 & Harvard University & $61,625.55$ & $62,316.47$ & $1$ \\
2 & Stanford University & $53,594.05$ & $50,939.35$ & $2$ \\
3 & University of Pennsylvania & $41,243.99$ & $41,514.80$ & $3$ \\
4 & Massachusetts Institute of Technology (MIT) & $33,887.49$ & $26,017.71$ & $5$ \\
5 & University of Chicago & $28,451.34$ & $20,297.96$ & $7$ \\
6 & Columbia University & $28,422.28$ & $35,225.33$ & $4$ \\
7 & Northwestern University & $27,817.83$ & $25,461.07$ & $6$ \\
8 & University of California - Berkeley & $20,807.98$ & $18,655.18$ & $9$ \\
9 & New York University & $15,737.72$ & $18,846.39$ & $8$ \\
10 & Duke University & $13,324.78$ & $9,575.18$ & $11$ \\
11 & Dartmouth College & $12,352.50$ & $9,118.52$ & $13$ \\
12 & Yale University & $9,570.94$ & $9,423.80$ & $12$ \\
13 & University of Michigan - Ann Arbor & $8,907.03$ & $6,913.51$ & $15$ \\
14 & University of California - Los Angeles & $8,547.90$ & $10,689.26$ & $10$ \\
15 & University of Virginia & $8,476.41$ & $7,938.16$ & $14$ \\
16 & University of Texas - Austin & $5,856.30$ & $5,402.97$ & $16$ \\
17 & Cornell University & $3,971.39$ & $4,795.05$ & $17$ \\
18 & Georgetown University & $3,581.05$ & $4,524.20$ & $18$ \\
19 & Northwestern University$^*$ & $3,322.52$ & $904.00$ & $29$ \\
20 & University of North Carolina - Chapel Hill & $3,222.41$ & $3,720.43$ & $20$ \\
21 & University of Southern California & $3,015.49$ & $3,984.26$ & $19$ \\
22 & Carnegie Mellon University & $2,738.56$ & $2,925.25$ & $21$ \\
23 & Vanderbilt University & $1,720.93$ & $842.10$ & $30$ \\
24 & University of Washington - Seattle & $1,632.21$ & $1,496.33$ & $24$ \\
25 & University of Notre Dame & $1,538.99$ & $1,820.17$ & $23$ \\
26 & Emory University & $1,280.38$ & $637.33$ & $34$ \\
27 & Indiana University - Bloomington & $1,032.00$ & $967.72$ & $28$ \\
28 & Boston College & $904.16$ & $1,219.73$ & $25$ \\
29 & University of Minnesota - Twin Cities & $775.52$ & $588.47$ & $35$ \\
30 & Boston University & $752.64$ & $1,177.05$ & $26$ \\
31 & Washington University in St. Louis & $737.05$ & $740.48$ & $32$ \\
32 & Brigham Young University & $645.98$ & $995.83$ & $27$ \\
33 & Cornell University$^**$ & $492.16$ & $567.24$ & $36$ \\
34 & University of Wisconsin - Madison & $380.31$ & $761.35$ & $31$ \\
35 & University of North Carolina - Chapel Hill & $363.10$ & $-34.46$ & $94$ \\
\hline \\[-1.8ex]
\end{tabular}
}
    \label{tab:prd}
    \end{center}
\raggedright{\footnotesize{}{
The table shows our $m$ measure  calculated separately for tests taken before July 1, 2011 (Early) and after (Later). The table is ordered by the values for more recent scores. The last column shows the rank for the early period. $^*$ This line refers to the one-year accelerated FTMBA program for the school. See Table \ref{tab:t2} notes.}}
\end{table}

\clearpage
\fi

\subsection{Additional Material}
The similarity between the $m$-value rankings and conventional school rankings can be seen in Figure~\ref{fig:ranks}. The pattern implies that using candidates' school choices effectively captures most of the information used to create expert rankings. 

\begin{figure}[htbp]\caption{US News Rank vs M-Values (Rank)}
\label{fig:ranks}
\begin{center}
\includegraphics[width=0.95\columnwidth]{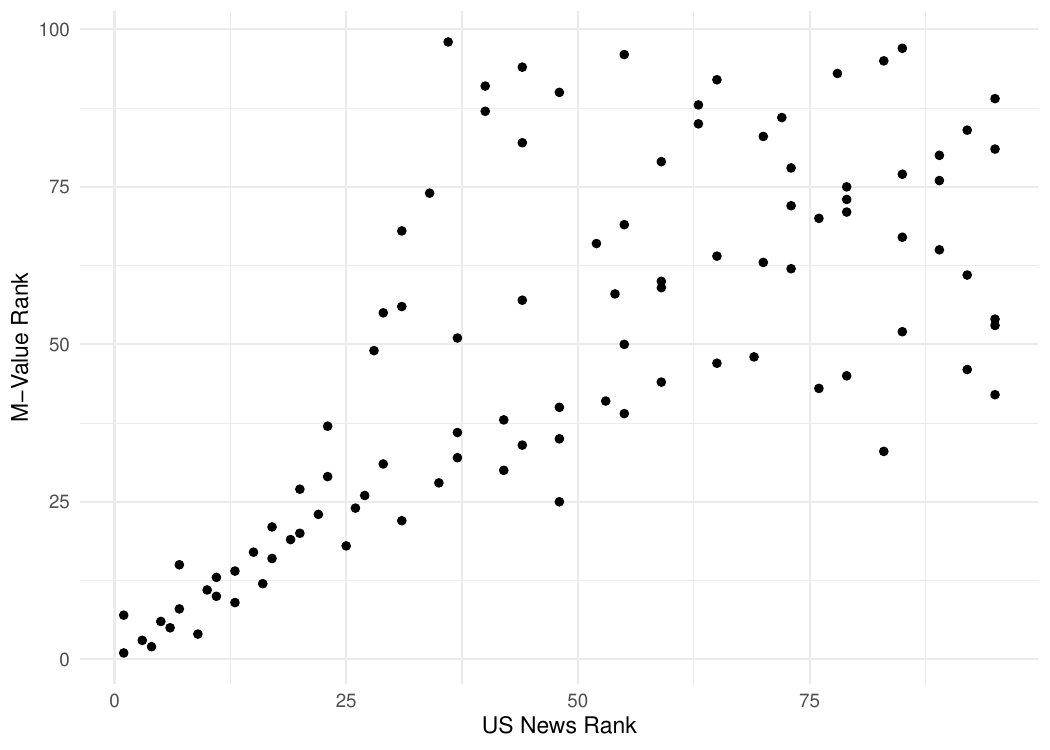}
\end{center}
\begin{minipage}[h]{.97\linewidth}\footnotesize{}{The figure shows the rank of the M-Value for each FTMBA program with a USNWR ranking.}\\
\footnotesize{}{Data Source: GMAC.}
\end{minipage}
\end{figure}

Table \ref{tab:tall} shows differences between the rankings of programs using more detailed groups of undergraduate majors. (The goal of the empirical exercise is a more detailed look at differences in selections by academic major, than the two broad groups in Table \ref{tab:maj}). Schools in the table are ordered by the ranking for Economics majors, and the columns of the table show changes in the rankings by undergraduate academic major (compared to Economics). A positive number is a drop in the rankings (or an increase away from 1), while a negative number is an improvement for the major in the ranking of the school program in the first column. 

It is necessary to clarify that the undergraduate majors in Table \ref{tab:maj} are groups created from the responses provided by test candidates. Candidates select their major from a drop-down menu that includes up to 42 categories, which, for convenience, are organized into the nine (9) used for this table. Candidates reported no undergraduate major, or chose ``My major is not listed'' for about 8.5\% of tests. The abbreviated names for degrees or groups of majors are: Accounting (Acct), Arts and Humanities (Arts), Engineering (Eng), Math and Computer Science (Math/CS), Other Social Sciences (SocSci), and Biz (Other business undergraduate degrees),

\begin{table}[!htbp] 
\begin{center}
  \caption{Rankings by Undergrad Major}
  \label{tab:tall}
\resizebox{0.99\textwidth}{!}{
\begin{tabular}{@{\extracolsep{5pt}} clrrrrrrrr}
\\[-1.8ex]\hline
\hline \\[-1.8ex]
 & University & Acct & Arts & Eng & Finance & Math/CS & Biz & SocSci & Sciences \\
\hline \\[-1.8ex]
1 & Harvard University & $0$ & $0$ & $0$ & $0$ & $0$ & $0$ & $0$ & $0$ \\
2 & Stanford University & $1$ & $0$ & $0$ & $1$ & $0$ & $0$ & $0$ & $0$ \\
3 & University of Pennsylvania & $-1$ & $1$ & $0$ & $-1$ & $0$ & $0$ & $0$ & $0$ \\
4 & Massachusetts Institute of Technology (MIT) & $5$ & $4$ & $0$ & $3$ & $0$ & $4$ & $3$ & $1$ \\
5 & Columbia University & $1$ & $-2$ & $2$ & $-1$ & $0$ & $0$ & $-1$ & $-1$ \\
6 & Northwestern University & $-1$ & $0$ & $-1$ & $0$ & $0$ & $-2$ & $-1$ & $0$ \\
7 & University of Chicago & $-3$ & $3$ & $-1$ & $-2$ & $0$ & $0$ & $2$ & $0$ \\
8 & University of California - Berkeley & $2$ & $-3$ & $0$ & $1$ & $0$ & $-2$ & $-2$ & $0$ \\
9 & New York University & $-1$ & $-2$ & $1$ & $-1$ & $0$ & $0$ & $-1$ & $0$ \\
10 & Dartmouth College & $3$ & $2$ & $1$ & $4$ & $3$ & $5$ & $0$ & $3$ \\
11 & University of California - Los Angeles & $-4$ & $0$ & $1$ & $1$ & $-1$ & $-1$ & $2$ & $1$ \\
12 & Yale University & $4$ & $-3$ & $3$ & $1$ & $2$ & $1$ & $0$ & $-1$ \\
13 & Duke University & $-2$ & $1$ & $-4$ & $-2$ & $-2$ & $-1$ & $-2$ & $-3$ \\
14 & Northwestern University$^*$ & $9$ & $32$ & $11$ & $10$ & $11$ & $12$ & $22$ & $17$ \\
15 & University of Virginia & $-3$ & $-2$ & $2$ & $-5$ & $0$ & $-4$ & $-1$ & $0$ \\
16 & University of Michigan - Ann Arbor & $-1$ & $-1$ & $-2$ & $-1$ & $-4$ & $-2$ & $0$ & $-2$ \\
17 & University of Southern California & $2$ & $1$ & $3$ & $6$ & $2$ & $0$ & $1$ & $1$ \\
18 & Georgetown University & $3$ & $-1$ & $3$ & $1$ & $2$ & $1$ & $-3$ & $2$ \\
19 & University of North Carolina - Chapel Hill & $-1$ & $3$ & $0$ & $-2$ & $3$ & $-1$ & $1$ & $0$ \\
20 & Cornell University & $-3$ & $0$ & $-4$ & $-2$ & $-3$ & $0$ & $-1$ & $-4$ \\
21 & University of Texas - Austin & $-7$ & $-5$ & $-8$ & $-5$ & $-5$ & $-5$ & $-4$ & $4$ \\
22 & Carnegie Mellon University & $4$ & $1$ & $-4$ & $0$ & $-4$ & $-1$ & $6$ & $-5$ \\
23 & University of Notre Dame & $-1$ & $5$ & $-1$ & $-2$ & $0$ & $-1$ & $-2$ & $0$ \\
24 & Emory University & $4$ & $6$ & $49$ & $5$ & $15$ & $4$ & $1$ & $10$ \\
25 & Boston College & $6$ & $-4$ & $569$ & $6$ & $5$ & $5$ & $1$ & $-1$ \\
26 & Vanderbilt University & $-2$ & $-1$ & $381$ & $-6$ & $173$ & $-3$ & $-2$ & $25$ \\
27 & University of Washington - Seattle & $0$ & $-3$ & $-3$ & $0$ & $0$ & $0$ & $-5$ & $-5$ \\
28 & University of Wisconsin - Madison & $2$ & $7$ & $421$ & $10$ & $9$ & $1$ & $10$ & $12$ \\
29 & University of Minnesota - Twin Cities & $4$ & $2$ & $4$ & $1$ & $-1$ & $3$ & $147$ & $-1$ \\
30 & Indiana University - Bloomington & $-5$ & $2$ & $-2$ & $-5$ & $-4$ & $-5$ & $3$ & $3$ \\
31 & University of Rochester & $12$ & $23$ & $5$ & $30$ & $14$ & $8$ & $9$ & $8$ \\
32 & University of California - Irvine & $2$ & $33$ & $476$ & $585$ & $1$ & $8$ & $226$ & $4$ \\
33 & Washington University in St. Louis & $7$ & $-4$ & $1$ & $-7$ & $-4$ & $1$ & $-6$ & $-4$ \\
34 & University of South Carolina - Columbia & $47$ & $5$ & $351$ & $21$ & $321$ & $17$ & $162$ & $411$ \\
35 & Northeastern University & $214$ & $14$ & $67$ & $223$ & $12$ & $30$ & $133$ & $448$ \\
\hline \\[-1.8ex]
\end{tabular}
}
\end{center}
\raggedright{\footnotesize{}{
The table shows rankings based on the $m$ measure for Economics majors, and differences between the ranking for Econ majors and the undergraduate majors in the columns. The abbreviated group names are Accounting (Acct), Arts and Humanities (Arts), Engineering (Eng), Math and Computer Science (Math/CS), Biz (Other business undergraduate degrees), and Other Social Sciences (SocSci). We show the programs ranked as top 35.  $^*$ This line refers to the one-year accelerated FTMBA program for the school. See Table \ref{tab:t2} notes.}}
\end{table}

Rankings are broadly similar across undergraduate majors, even when viewed at this level of detail. The overwhelming majority of changes in ranking across the columns in Table \ref{tab:maj}, are zero or in the single digits. Notably, the ranking of top 10 FTMBA schools based on the selections and scores of Economics majors, are almost identical to the ranking by undergrads from the sciences, as well as those from math and computer science. Engineering majors too resemble economics once one allows for the change from rank no. 5, where Columbia drops by two and other schools rise to take the spots. 

The notable exceptions to this pattern are interesting in how they allow tests of the intuition that some schools or programs have more of a reputation that appeal to candidates from certain undergraduate majors. The one-year accelerated MBA program at Northwestern, which candidates from economics seem to favor with a rank of 14, appears much less appealing to all other groups of undergraduate majors across the board. 

The biggest differences in rankings across undergraduate majors at this level of detail seem to be for Engineering, as well as other majors for schools ranked by economics majors near the bottom of the top 35. The movements in rank for engineering students by more than 380 (for Boston College, Vanderbilt, U-Wisconsin Madison and UC Irvine) represent strong shifts in preference across majors, an indication that other schools have reputations that are a bigger draw for students with engineering backgrounds. In general, as we go down the rankings, the movements across majors also increase. This is to be expected, given our proposition that schools offer a perceived generic value to candidates with less of an idiosyncratic component at the top of the ranking of schools.

Table \ref{tab:t4} shows the ordering of school programs by the total of the $m$ measure across all scores. Only the best test attempts for each student were included in the estimates for this table.

\begin{table}[!htbp] 
\begin{center}
  \caption{Total $M$ Measures for Best Attempts} 
  \label{tab:t4} 
\resizebox{0.99\textwidth}{!}{
\begin{tabular}{@{\extracolsep{5pt}} clrr} 
\\[-1.8ex]\hline 
\hline \\[-1.8ex]
 & University & $m$ value & $m+$ value \\ 
\hline \\[-1.8ex] 
1 & Harvard University & $58,560.70$ & $3,602$ \\
2 & Stanford University & $49,432.44$ & $3,540$ \\
3 & University of Pennsylvania & $39,110.48$ & $3,506$ \\
4 & Columbia University & $29,819.08$ & $3,490$ \\
5 & Massachusetts Institute of Technology (MIT) & $28,243.17$ & $3,486$ \\
6 & Northwestern University & $26,075.44$ & $3,524$ \\
7 & University of Chicago & $22,544.07$ & $3,438$ \\
8 & University of California - Berkeley & $18,835.79$ & $3,424$ \\
9 & New York University & $16,907.68$ & $3,372$ \\
10 & Duke University & $11,292.65$ & $3,368$ \\
11 & Dartmouth College & $10,067.02$ & $3,486$ \\
12 & University of California - Los Angeles & $9,708.73$ & $3,176$ \\
13 & Yale University & $9,067.91$ & $3,216$ \\
14 & University of Michigan - Ann Arbor & $8,022.84$ & $3,338$ \\
15 & University of Virginia & $7,814.38$ & $3,196$ \\
16 & University of Texas - Austin & $6,094.73$ & $2,938$ \\
17 & Georgetown University & $4,405.99$ & $2,994$ \\
18 & Cornell University & $4,332.02$ & $3,152$ \\
19 & University of Southern California & $3,923.27$ & $2,818$ \\
20 & University of North Carolina - Chapel Hill & $3,745.27$ & $2,828$ \\
21 & Carnegie Mellon University & $2,948.74$ & $3,110$ \\
22 & Northwestern University$^*$ & $2,202.13$ & $2,984$ \\
23 & University of Notre Dame & $2,029.64$ & $2,814$ \\
24 & University of Washington - Seattle & $1,880.88$ & $2,698$ \\
25 & Vanderbilt University & $1,424.44$ & $2,498$ \\
26 & Indiana University - Bloomington & $1,384.09$ & $2,596$ \\
27 & Boston University & $1,316.98$ & $2,530$ \\
28 & Boston College & $1,290.89$ & $2,572$ \\
29 & Brigham Young University & $1,183.83$ & $2,508$ \\
30 & University of Minnesota - Duluth & $1,071.71$ & $1,496$ \\
31 & Emory University & $1,018.46$ & $2,376$ \\
32 & Washington University in St. Louis & $942.98$ & $2,782$ \\
33 & University of Minnesota - Twin Cities & $849.95$ & $2,472$ \\
34 & University of Wisconsin - Madison & $787.22$ & $2,418$ \\
35 & Babson College & $750.07$ & $2,408$ \\
\hline \\[-1.8ex]
\end{tabular} 
}   
\end{center} 
\raggedright{\footnotesize{}{The table shows our measure based on the fraction of students with each test score that selected a school as part of the GMAT exam, $m(c) = \sum_{s\in S} \sum_{s'\in S}(g_{s'}(c) - g_s(c))(s'-s)$. The last column shows the sum of $m$ across all scores between 200 and 800. $^*$ This line refers to the one-year accelerated FTMBA program for the school. (See Table \ref{tab:t2} notes.)}}
\end{table} 

\subsection{Alternative empirical approaches}\label{sec:altempiricalstrategy}
Our empirical results are based on two different ways to operationalize the ideas expressed in Theorem~\ref{thm:main}. There are further alternatives that may prove fruitful in related problems, and we lay them out here. Specifically, we propose a class of heuristics for constructing the ranking $\leq$ from the data. Let $m_s(c)$ be the fraction of students with score $s$ who apply to college $c$.

So, $m_s(c)$ is the empirical density of college choices for students with score $s$.
Suppose that $\leq$ is the correct ranking over $\C$ and $f:\C\to \Re$ is a monotone increasing function. Then we know from Theorem~\ref{thm:main} that \[\text{if }  s< s' \then 
\sum_{c\in \C} f(c) m_s(c) - \sum_{c\in \C} f(c) m_{s'}(c) \leq 0,
\]  In other words, when we calculate the expectation of $f$ with respect to the empirical distribution of applications for each score, the expectation of the larger score is greater than the expectation of the lower score. 

Fix a family of functions $\mathcal{F}$, then we want to choose $f\in \mathcal{F}$ so as to minimize
 \[ \sum_{l=1}^{T-1}
\max\{ \sum_{c\in \C} f(c) m_{s^{l+1}}(c) - \sum_{c\in \C} f(c) m_{s^l}(c), 0 \}
\]
(recall that  $s^1> s^2 > \cdots > s^T$). When the  family of functions $\mathcal{F}$ is small, or has a conveniently succinct representation, we obtain a method for recovering the ordering $\geq$. 

In particular, imagine that we have a vector of observable characteristics $x_c\in \Re^n$ for each school $c\in \C$. Then we can let $\mathcal{F}$ be the collection of functions $x\mapsto \beta \cdot x_c$, for each $\beta\in K\subseteq\Re^n$. 

Then we would be solving the problem
\[\begin{array}{cc}
\min_{\beta\in \Re^n} & 
\sum_{l=1}^{T-1}
\max\{ \sum_{c\in \C} \beta\cdot x_c m_{s^{l+1}}(c) - \sum_{c\in \C} \beta \cdot x_c m_{s^l}(c), 0 \}  \\
s.t & \beta\in K, 
\end{array} 
\] for some convex and compact set $K$, for example a compact rectangle in $\Re^n$ (alternatively, one could add a lasso-type penalty for large $\beta$s). 

The resulting optimization problem is a convex program, and therefore computationally easy to solve. It may, however, be convenient to ``bin'' score values for those values that do not represent large numbers of  students.  Given a solution $\beta$, we conclude that $c\geq c'$ if $\beta \cdot x_c\geq \beta \cdot x_{c'}$. One advantage of this approach is that we would obtain an estimate $\beta \cdot x_c$ for the common value of college $c$, and the coefficients in $
\beta$ could have an interpretation.

\end{document}